\newtheorem{theorem}{Theorem}
\newtheorem{lemma}[theorem]{Lemma}
\newtheorem{definition}{Definition}
\newcommand{\bu}{\bm{u}}
\newcommand{\bv}{\bm{v}}
\newcommand{\bw}{\bm{w}}
\title{Deriving bases for Abelian functions}
\author{Matthew England}
\date{May 2011}
\begin{document}

\maketitle

\begin{abstract}
We present a new method to explicitly define Abelian functions associated with algebraic curves, for the  purpose of finding bases for the relevant vector spaces of such functions.  We demonstrate the procedure with the functions associated with a trigonal curve of genus four.  The main motivation for the construction of such bases is that it allows systematic methods for the derivation of the addition formulae and differential equations satisfied by the functions.  We present a new 3-term 2-variable addition formulae and a complete set of differential equations to generalise the classic Weierstrass identities for the case of the trigonal curve of genus four.
\end{abstract}

\bigskip

\subsubsection*{Contact Details}
School of Mathematics and Statistics, University of Glasgow, G12 8QQ, UK \\ 
Matthew.England@gla.ac.uk \\ 
\texttt{http://www.maths.glasgow.ac.uk/\~{}mengland/}

\subsubsection*{Acknowledgements}
Thanks to Chris Eilbeck, Yoshihiro Onishi and Chris Athorne for useful conversations on this topic.  Particular thanks goes to one of the anonymous referees for useful suggestions throughout the paper and especially those comments that led to the discussion of Abelian function as operators on sigma functions at the end of Section \ref{ss_classes}.  This referee pointed out the relationship in equation (\ref{eq:T2inSig}) which led to the more general definition beneath.

\newpage

\section{Introduction} \label{SEC_Intro}

In this paper we present a new addition formula and sets of differential equations satisfied by the Abelian functions, with poles along the standard theta divisor, associated with a trigonal curve of genus four.  Perhaps of greater importance, is the discussion of how similar results may be derived for any suitable curve through the derivation of bases for the corresponding vector spaces.  The problem of choosing a basis of linearly independent functions from a wider set can be solved using the $\sigma$-function expansion, (see for example \cite{eemop07} and \cite{MEe09}).  The problem of identifying the sets of suitable functions from which to choose is the main focus of this paper.  As part of the discussion, we present a number of new explicit classes of such functions which may be used for any suitable curve, (i.e. without restriction on the degree of the curve variables or the genus).

We apply these new results to the trigonal curve of genus four, allowing us to derive a new addition formula in Theorem \ref{thm:35_3t2v}.  The existence of the addition formula was shown in \cite{bego08} but the formula could only now be derived after the explicit derivation of a basis for the vector space of Abelian functions having poles of order at most three along the standard theta divisor.  A similar addition formula has recently been derived for the trigonal curve of genus three in \cite{MEeo10}.  This also required the derivation of a new basis which used the general ideas discussed here.  Such formulae were also the topic of \cite{emo11} which presented results for the Weierstrass $\wp$ and $\sigma$-functions along with generalisations to genus two.

We also derive complete sets of differential equations for the cyclic trigonal curve of genus four.  This can be seen as a continuation of the work started in \cite{bego08}, however the complete sets presented here required both the knowledge of the bases and new efficient computational techniques first introduced in \cite{MEe09}.  Some of the relations are very lengthly and so not included in the paper itself.  However, they are stored in a variety of formats online at \cite{DBAFweb}.

\vspace{3mm}

Many of the results we present can be viewed as generalisations of classic results for elliptic functions.  We will conclude the
introduction below by reminding the reader of the relevant results from the theory of Weierstrass functions.  Then in Section \ref{SEC_Construct} we revise the definitions of the Abelian functions, discussing the general properties they satisfy.  We proceed in Section \ref{SEC_Bases} to consider the problem of determining bases for the vector spaces of such functions, presenting an explicit construction for the trigonal curve of genus four.  In Section \ref{SEC_DE} we derive sets of differential equations satisfied by the functions and in Section \ref{SEC_AF} we present the new addition formula.

\vspace{3mm}

The classical results for Weierstrass elliptic functions forms a template for our theory.  Let $\wp(u)$ be the \textit{Weierstrass   $\wp$-function}, which as an elliptic function has two complex periods $\omega_1,\omega_2$:
\begin{equation} \label{eq:Intro_period}
\wp(u + \omega_1) = \wp(u + \omega_2) = \wp(u), \qquad \mbox{for all } \, u \in \mathbb{C}.
\end{equation}
The $\wp$-function has the simplest possible pole structure for an elliptic function and satisfies a number of interesting properties.  For example, it can be used to parametrise an elliptic curve,
\begin{equation} \label{eq:Intro_ec}
y^2 = 4x^3 - g_2x - g_3,
\end{equation}
where $g_2$ and $g_3$ are constants.  It also satisfies the following well-known differential equations,
\begin{eqnarray}
\big(\wp'(u)\big)^2 &=& 4\wp(u)^3 - g_2\wp(u) - g_3,            \label{eq:Intro_elliptic_diff1}     \\
           \wp''(u) &=& 6\wp(u)^2 - \textstyle \frac{1}{2}g_2.  \label{eq:Intro_elliptic_diff2}
\end{eqnarray}
Weierstrass introduced an auxiliary function, $\sigma(u)$, in his theory which satisfies
\begin{eqnarray}
\wp(u) &=& - \frac{d^2}{d u^2} \log \big[ \sigma(u) \big].  \label{eq:Intro_elliptic_ps}
\end{eqnarray}
The $\sigma$-function plays a crucial role in both the generalisation and in applications of the theory.  It satisfies the following two term addition formula,
\begin{eqnarray}
- \frac{\sigma(u+v)\sigma(u-v)}{\sigma(u)^2\sigma(v)^2} = \wp(u) - \wp(v). \label{eq:Intro_elliptic_add}
\end{eqnarray}
In this document we present generalisations of equations (\ref{eq:Intro_period})$-$(\ref{eq:Intro_elliptic_add}) for higher genus functions.

\vspace{3mm}

Elliptic functions have been the subject of much study since their discovery and have been extensively used to enumerate solutions of non-linear wave equations.  Recent times have seen a revival of interest in the theory of Abelian functions, which have multiple independent periods, and so generalise the elliptic functions.  The periodicity property is usually defined in association with an underlying algebraic curve.  These functions have been shown to solve some of the differential equations arising in mathematical physics and have been used in a variety of applications.

\section{Constructing Abelian functions} \label{SEC_Construct}

In this paper we study Abelian functions associated with the  following set of algebraic curves.

\begin{definition} \label{def:HG_general_curves}
For two coprime integers $(n,s)$ with $s>n$ we define an \textbf{$\bm{(n,s)}$-curve} as a non-singular algebraic curve defined by $f(x,y)=0$, where
\begin{equation} \label{eq:general_curve}
f(x,y) = y^n + p_1(x)y^{n-1} + p_{2}(x)y^{n-2} + \cdots + p_{n-1}(x)y - p_{n}(x).
\end{equation}
Here $x$, $y$ are complex variables and $p_j(x)$ are polynomials in $x$ of degree (at most) $\lfloor{ js } / n \rfloor$.  We define a simple subclass of the curves by setting $p_j(x)=0$ for $0\leq j \leq n-1$.  These curves are defined by
\begin{equation} \label{eq:ct_curve}
f(x,y) = y^n - (x^s +\lambda_{s-1}x^{s-1}+\dots+\lambda_{1}x+\lambda_0)
\end{equation}
and are called \textbf{cyclic $\bm{(n,s)}$-curves}.  We follow previous publications and denote the curve constants by $\lambda_j$ for the cyclic curves and by $\mu_j$ for the general $(n,s)$-curves.  Note that in the literature the word \lq\lq cyclic" is sometimes replaced by \emph{\lq\lq strictly"} or \emph{\lq\lq purely (\(n\)-gonal)"}.
\end{definition}
Unless we specify otherwise, we will be working with the general $(n,s)$-curves.  In some cases we restrict to the cyclic case for either theoretical or computational reasons, as noted at the time.

We denote the surface defined by an $(n,s)$-curve by $C$.  The genus of $C$ is given by $g=\frac{1}{2}(n-1)(s-1)$ and the associated Abelian functions, to be defined shortly, will be multivariate with $g$ variables, $\bm{u} = (u_1, \dots, u_g)$.  As an example, the elliptic curve in equation (\ref{eq:Intro_ec}) is a (2,3)-curve and the associated Weierstrass $\sigma$ and $\wp$-functions depend upon a single complex variable $u$.

The $(n,s)$-curves with $n=2$ are generally defined to be \emph{hyperelliptic curves}, (when $s=3$ they reduce to elliptic curves).  Klein developed an approach to generalise the Weierstrass $\wp$-function to Abelian functions associated with hyperelliptic curves, as described in Baker's classic texts \cite{ba97} and \cite{ba07}.   This approach has motivated the general definitions in \cite{bel97} and \cite{eel00} of what we now call \emph{Kleinian $\wp$-functions}.  It is the properties of these and the generalised $\sigma$-function that are our objects of study.

In the last few years a good deal of progress has been made on the theory of Abelian functions associated to those $(n,s)$-curve with $n=3$, which we label \textit{trigonal curves}.  The functions associated to a trigonal curve were first studied in detail in \cite{bel00}, where the authors of \cite{bel97} furthered their methods for the hyperelliptic case to obtain realisations of the Jacobian variety and some key differential equations between the functions.  The simplest trigonal curve is the (3,4)-curve which is of genus three.  This was first studied in detail in \cite{eemop07} with new results recently presented in \cite{MEeo10}.  The other canonical class of trigonal curve is the (3,5)-curve, which has genus four and will be the example considered in this paper.  This class of curves is defined by
\begin{eqnarray}
& &y^3+(\mu_3x+\mu_5)y^2+(\mu_1x^3+\mu_4x^2+\mu_7x+\mu_{10})y \nonumber \\
& &\qquad = x^5+\mu_{3}x^4+\mu_{6}x^3+\mu_{9}x^2+\mu_{12}x+\mu_{15},
\label{eq:35}
\end{eqnarray}
with the cyclic restriction given by
\begin{eqnarray}
y^3 &= x^5 + \lambda_4x^4 + \lambda_3x^3 + \lambda_2x^2 + \lambda_1x + \lambda_0.
\label{eq:c35}
\end{eqnarray}
The Abelian functions associated with such curves were first studied in \cite{bg06} for use in an application to the Benney equations, shortly followed by the more detailed study of the curves given in \cite{bego08}.  The authors of these papers provided an explicit construction of the differentials on the curve, solved the Jacobi inversion problem and obtained a number of differential equations between the $\wp$-functions.  Some of the properties of higher genus trigonal curves have recently been explored in \cite{MEhgt10}.  The class of $(n,s)$-curves with $n=4$ are defined as \textit{tetragonal curves} and were considered for the first time in \cite{MEe09} and \cite{MEg09}.  The lowest genus tetragonal curve is of genus six, leading to some computational restrictions.

We define a set of weights for the theory, denoted by $\mathrm{wt}$ and often referred to as the {\it Sato weights}.  We start by setting
\begin{equation}  \label{eq:Sato_weight}
\mathrm{wt}(x)=-n, \ \ \mathrm{wt}(y)=-s
\end{equation}
and then choosing the weights of the curve parameters to be such that the curve equation is homogeneous.  We see that for cyclic curves this imposes $\mathrm{wt}( \lambda_j )=-n(s-j)$ while for the non-cyclic curves we usually label the $\mu_j$ with their weight.  For example, in equation (\ref{eq:35}) the $\mu_j$ all have weight $-j$ and the equation has weight $-15$ overall.  The weights of all other objects in the theory may be derived uniquely and will be commented on throughout the paper.  Note that all the objects in this paper have a definite Sato weight and all the equalities are homogeneous with respect to this weight.  A more detailed discussion of the weight properties may be found elsewhere, for example in \cite{MEe09}.

\vspace{3mm}

We now discuss how to construct the functions associated to the curve.  We start by choosing a basis for the space of differential forms of the first kind.  These are the differential 1-forms which are holomorphic on $C$.  There is a standard procedure to construct this basis for an $(n,s)$-curve, (see for example \cite{N10}).  Any (3,5)-curve has a basis given by
\begin{equation} \label{eq:du35}
\bm{du}=\Big(\frac{dx}{f_y(x,y)}, \frac{xdx}{f_y(x,y)}, \frac{ydx}{f_y(x,y)}, \frac{x^2dx}{f_y(x,y)}\Big).
\end{equation}
In general the basis is given by $h_idx/f_y, i=1 \dots g$ where the $h_i$ are monomials in $(x,y)$ whose structure may be predicted by the Weierstrass gap sequence.  We note here that some authors use a different normalisation when working with $(n,s)$-curves.  For example, in the hyperelliptic cases it is common for the coefficient of $x^s$ to be set to $4$ instead of $1$ in the curve definition.  This corresponds to the removal of a factor of $\tfrac{1}{2}$ from the basis of differentials, which would otherwise each have denominator $f_y=2y$.  It is simple to move between such different normalisations.

We next choose a symplectic basis in \(H_1(C,\mathbb{Z})\) of cycles (closed paths) upon the compact Riemann surface defined by $C$.  We denote these by
$\{\alpha_1, \dots \alpha_g,$ $\beta_1, \dots \beta_g\}$, with $g=4$ in the (3,5)-case.  We ensure the cycles have intersection numbers
\begin{eqnarray*}
\alpha_i \cdot \alpha_j = 0, \qquad \beta_i \cdot \beta_j = 0, \qquad
\alpha_i \cdot \beta_j = \delta_{ij} =
\left\{ \begin{array}{rl}
1 & \mbox{if }  i = j, \\
0 & \mbox{if }  i \neq j.
\end{array} \right.
\end{eqnarray*}
The choice of these cycles is not unique, but the functions we work with will be independent of the choice.

We introduce $\bm{dr}$ as a basis of differentials of the second kind.  These are meromorphic differentials on $C$ which have their only pole at $\infty$.  This basis is usually derived alongside a fundamental differential of the second kind which plays an important role in the theory of $\wp$-functions.  We do not explicitly use either of these objects in this paper and so refer the reader to \cite{N10} for the general theory and \cite{bg06} which contains all the details for the trigonal curve of genus four.

We can now define the standard period matrices associated to the curve as
\begin{eqnarray*}
\begin{array}{cc}
      \omega'  = \left( \oint_{\alpha_k} du_\ell \right)_{k,\ell = 1,\dots,g} &
\qquad\omega'' = \left( \oint_{ \beta_k} du_\ell \right)_{k,\ell = 1,\dots,g}  \\
        \eta'  = \left( \oint_{\alpha_k} dr_{\ell} \right)_{k,\ell = 1,\dots,g} &
\qquad  \eta'' = \left( \oint_{ \beta_k} dr_{\ell} \right)_{k,\ell = 1,\dots,g}
\end{array}.
\end{eqnarray*}
We define the period lattice $\Lambda$ formed from $\omega', \omega''$ by
\[
\Lambda = \big\{ \omega'\bm{m} + \omega''\bm{n}, \quad \bm{m},\bm{n} \in \mathbb{Z}^g \big\}.
\]
Note the comparison with equation (\ref{eq:Intro_period}) and that the period matrices play the role of the scalar periods in the elliptic case.  All the functions we treat in this paper are defined upon \(\mathbb{C}^g\) with coordinates usually expressed as
\begin{equation}
  \bm{u}=(u_1, \dots, u_g).
\end{equation}
Note that any point $\bu \in \mathbb{C}^g$ can be expressed as
\begin{eqnarray*}
\bu &= \sum_{i=1}^g \int_{\infty}^{P_i} \bm{du},
\end{eqnarray*}
where the $P_i$ are points upon $C$.  The period lattice $\Lambda$ is a lattice in the space $\mathbb{C}^g$.  Then the Jacobian variety of $C$ is presented by $\mathbb{C}^g/\Lambda$, and is denoted by $J$.
We define \(\kappa\) as the modulo \(\Lambda\) map,
\begin{equation}
\kappa \ : \ \mathbb{C}^g \to J.
\end{equation}
For $k=1$, $2$, $\dots$ define $\mathfrak{A}$, the \emph{Abel map} from the $k$-th symmetric product \(\mathrm{Sym}^k(C)\) of \(C\)  to $J$ by
\begin{eqnarray}
\mathfrak{A}: \mbox{Sym}^k(C) &\to&     J \nonumber \\
(P_1,\dots,P_k)   &\mapsto&
\left( \int_{\infty}^{P_1} \bm{du} + \dots + \int_{\infty}^{P_k} \bm{du} \right) \pmod{\Lambda},
\label{eq:Abel}
\end{eqnarray}
where the $P_i$ are again points upon $C$.  Denote the image of the $k$-th Abel map by $W^{[k]}$ and define the \emph{$k$-th standard theta subset} (often referred to as the $k$-th strata) by
\begin{eqnarray*}
\Theta^{[k]} = W^{[k]} \cup [-1]W^{[k]},
\end{eqnarray*}
where \([-1]\) means that
\begin{eqnarray*}
[-1](u_1, \dots ,u_g) = (-u_1, \dots ,-u_g).
\end{eqnarray*}
When $k=1$ the Abel map gives an embedding of the curve $C$ upon which we define $\xi= x^{-\frac{1}{n}}$ as the local parameter at the origin, $\mathfrak{A}_1(\infty)$.  We can then express $y$ and the basis of differentials using $\xi$ and integrate to give series expansions for $\bu$.  We can check the weights of $\bu$ from these expansions and see that they are prescribed by the Weierstrass gap sequence.  In particular, for the (3,5)-curve the variables $\bu=(u_1,\dots,u_4)$ have weights $(7,4,2,1)$.

\vspace{3mm}

\noindent We now consider functions that are periodic with respect to the lattice $\Lambda$.
\begin{definition} \label{def:HG_Abelian}
Let $\mathfrak{M}(\bu)$ be a meromorphic function of $\bu \in \mathbb{C}^g$.  Then $\mathfrak{M}$ is a \textbf{standard Abelian function associated with $\bm{C}$} if it has poles only along \(\kappa^{-1}(\Theta^{[g-1]})\) and satisfies, for all $\bm{\ell}\in \Lambda$,
\begin{equation} \label{eq:HG_Abelian}
\mathfrak{M}(\bu + \bm{\ell}) = \mathfrak{M}(\bu).
\end{equation}
\end{definition}

We define generalisations of the Weierstrass functions which satisfy equation (\ref{eq:HG_Abelian}), defined using the quasi-periodic function defined below.  First, let \(\bm{\delta} = \omega'\bm{\delta'}+\omega''\bm{\delta''}\) be the Riemann constant with base point $\infty$.  Then $[\bm{\delta}]$ is the theta characteristic representing the Riemann constant for the curve C with respect to the base point $\infty$ and generators $\{\alpha_j,\ \beta_j\}$ of $H_1(C,\mathbb{Z})$.  (See for example \cite{bel97} pp23-24.)

\begin{definition} \label{def:HG_sigma}
The \textbf{Kleinian $\bm{\sigma}$-function} associated to a general $(n,s)$-curve is defined using a multivariate $\theta$-function with characteristic \(\bm{\delta}\) as
\begin{eqnarray*}
\sigma(\bu) &=& c \exp \big( \textstyle \frac{1}{2} \bm{u} \eta' (\omega')^{-1} \bm{u}^T \big)
\cdot \theta[\bm{\delta}]\big((\omega')^{-1}\bm{u}^T \hspace*{0.05in} \big| \hspace*{0.05in} (\omega')^{-1} \omega''\big).  \\
&=& c \exp \big( \textstyle \frac{1}{2} \bm{u} \eta' (\omega')^{-1} \bm{u}^T \big)
\displaystyle \sum_{\bm{m} \in \mathbb{Z}^g} \exp \bigg[ 2\pi i \big\{  \\ & &
\textstyle \frac{1}{2} (\bm{m} + \bm{\delta'})^T (\omega')^{-1} \omega''(\bm{m}
+ \bm{\delta'}) + (\bm{m} + \bm{\delta'})^T ((\omega')^{-1} \bm{u}^T + \bm{\delta''} )\big\} \bigg].
\end{eqnarray*}
The constant $c$ is dependent upon the curve parameters and the basis of cycles and is fixed later,
following Lemma {\rm \ref{lem:sigexp}}.
\end{definition}
We now summarise the key properties of the $\sigma$-function.  See \cite{bel97} or \cite{N10} for the construction of the $\sigma$-function to satisfy these properties.
For any point $\bu \in \mathbb{C}^g$ we denote by $\bu'$ and $\bu''$ the vectors in $\mathbb{R}^g$ such that
\begin{equation}
\bu=\omega'\bm{u}'+\omega''\bm{u}''.
\end{equation}
Therefore a point  $\bm{\ell}\in\Lambda$ is written as
\begin{equation} \label{eq:HG_ell}
\bm{\ell} = \omega'\bm{\ell'} + \omega''\bm{\ell''} \in \Lambda, \qquad \bm{\ell'},\bm{\ell''} \in \mathbb{Z}^g.
\end{equation}
For $\bu, \bv \in \mathbb{C}^g$ and $\bm{\ell} \in \Lambda$, define $L(\bu,\bv)$ and $\chi(\bm{\ell})$ as follows:
\begin{eqnarray*}
L(\bu,\bv) &=& \bu^T \big( \eta'\bm{v'} + \eta''\bm{v''} \big), \\
\chi(\bm{\ell}) &=& \exp \big[ 2 \pi \mbox{i} \big\{  (\bm{\ell'})^T\delta'' - (\bm{\ell''})^T\delta'
+ \textstyle \frac{1}{2}(\bm{\ell'})^T \bm{\ell''} \big\} \big].
\end{eqnarray*}

\begin{lemma}
Consider the $\sigma$-function associated to an $(n,s)$-curve.
\begin{itemize}

\item It is an entire function on $\mathbb{C}^g$.

\item It has zeros of order one along the set  $\kappa^{-1}(\Theta^{[g-1]})$.  Further, we have $\sigma(\bu)~\neq~0$ outside the set.

\item For all $\bu \in \mathbb{C}^g, \bm{\ell} \in \Lambda$ the function has the quasi-periodicity property:
\begin{eqnarray} \label{eq:HG_quas}
\sigma(\bu + \bm{\ell}) = \chi(\bm{\ell})
\exp \left[ L \left( \bu + \frac{\bm{\ell}}{2}, \bm{\ell} \right) \right] \sigma(\bu).
\end{eqnarray}

\item It has definite parity given by
\begin{equation} \label{eq:sigparity}
\sigma(-\bu) = (-1)^{\frac{1}{24}(n^2-1)(s^2-1)}\sigma(\bu).
\end{equation}
Hence the function is even in the $(3,5)$-case.
\end{itemize}

\end{lemma}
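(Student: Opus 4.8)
The plan is to deduce all four items from the classical theory of Riemann $\theta$-functions, exploiting two structural features of Definition \ref{def:HG_sigma}: the prefactor $\exp\!\big(\tfrac12\bu\,\eta'(\omega')^{-1}\bu^{T}\big)$ is entire and nowhere vanishing, so it influences only the automorphy factor and the parity and not the analyticity or the zero set; and the characteristic $[\bm{\delta}]$ is the Riemann constant based at $\infty$, which is precisely the shift making the $\theta$-divisor equal to $W^{[g-1]}$ rather than a translate of it. Throughout I would follow the constructions in \cite{bel97} and \cite{N10}.

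For the first two bullets: the Riemann bilinear relations for $\bm{du}$ place $\tau=(\omega')^{-1}\omega''$ in the Siegel upper half space, so the series in Definition \ref{def:HG_sigma} converges absolutely and locally uniformly and defines an entire function of $(\omega')^{-1}\bu^{T}$; multiplication by the entire prefactor keeps $\sigma$ entire on $\mathbb{C}^{g}$. For the zeros I would quote the Riemann vanishing theorem — the $\theta$-divisor of $\theta[\bm{\delta}]\circ\mathfrak{A}$ is $W^{[g-1]}$ — together with the parity of $\theta[\bm{\delta}]$ established below, which upgrades this to $W^{[g-1]}\cup[-1]W^{[g-1]}=\Theta^{[g-1]}$, and the generic part of the Riemann singularity theorem to see that the vanishing is simple and that $\sigma\neq 0$ off $\Theta^{[g-1]}$; pulling back through $\kappa$ gives the statement on $\mathbb{C}^{g}$.

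The quasi-periodicity (\ref{eq:HG_quas}) is the one genuinely computational bullet and the main obstacle. Writing $\bm{\ell}=\omega'\bm{\ell'}+\omega''\bm{\ell''}$ gives $(\omega')^{-1}\bm{\ell}^{T}=\bm{\ell'}+\tau\bm{\ell''}$, so I would apply the standard transformation law of $\theta[\bm{\delta}]$ under the lattice translation $\bm{z}\mapsto\bm{z}+\bm{\ell'}+\tau\bm{\ell''}$ — which contributes the root of unity $\exp\!\big[2\pi\mathrm{i}\big((\bm{\delta'})^{T}\bm{\ell''}-(\bm{\delta''})^{T}\bm{\ell'}\big)\big]$ and the factor $\exp\!\big[-\pi\mathrm{i}(\bm{\ell''})^{T}\tau\bm{\ell''}-2\pi\mathrm{i}(\bm{\ell''})^{T}\big(\bm{z}+\tfrac12\bm{\ell'}\big)\big]$ — and combine it with the transformation of the quadratic prefactor. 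Collapsing the product to $\chi(\bm{\ell})\exp\!\big[L(\bu+\tfrac12\bm{\ell},\bm{\ell})\big]$ uses the generalised Legendre relations among $\omega',\omega'',\eta',\eta''$ — in particular the symmetry of $\eta'(\omega')^{-1}$ and the fact that $\omega''(\eta')^{T}-\omega'(\eta'')^{T}$ is a scalar multiple of the identity — which is exactly what cancels the $\tau$-quadratic term from the $\theta$-law against the mixed term from the prefactor and rewrites the linear-in-$\bm{\ell}$ exponent as $L$. The hard part is purely the bookkeeping: keeping transposes and factors of $2\pi\mathrm{i}$ in order and checking that the surviving constant is precisely $\chi(\bm{\ell})$ as defined.

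Finally, the parity (\ref{eq:sigparity}). The prefactor is even in $\bu$, so $\sigma(-\bu)=\big(\theta[\bm{\delta}](-\bm{z})/\theta[\bm{\delta}](\bm{z})\big)\sigma(\bu)$, and since $[\bm{\delta}]$ is a half-integer characteristic the identity $\theta[\bm{\delta}](-\bm{z}\mid\tau)=(-1)^{4(\bm{\delta'})^{T}\bm{\delta''}}\theta[\bm{\delta}](\bm{z}\mid\tau)$ shows that $\sigma$ inherits the parity of $[\bm{\delta}]$; the evaluation of that parity for an $(n,s)$-curve based at $\infty$ as $\tfrac1{24}(n^{2}-1)(s^{2}-1)\bmod 2$ is the input I would quote. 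A more self-contained alternative: since $\Theta^{[g-1]}$ is symmetric and $\chi(\bm{\ell})=\pm1$, the automorphy factor in (\ref{eq:HG_quas}) is unchanged by $\bm{\ell}\mapsto-\bm{\ell}$, so $\sigma(-\bu)/\sigma(\bu)$ is entire, nowhere zero and $\Lambda$-periodic, hence constant, and squaring forces it to equal $\pm1$; the sign is then read off from the lowest-weight term of the Taylor expansion of $\sigma$ at the origin, a homogeneous Schur-type polynomial of total degree $\tfrac1{24}(n^{2}-1)(s^{2}-1)$. Either route gives (\ref{eq:sigparity}), and for the $(3,5)$-curve the exponent is $\tfrac1{24}\cdot 8\cdot 24=8$, so $\sigma$ is even.
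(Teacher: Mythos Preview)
Your proposal is correct and follows the same underlying approach as the paper --- namely, deducing everything from the classical theory of multivariate $\theta$-functions --- but you supply far more content than the paper itself does. The paper's proof is essentially three lines of citation: entireness is ``clear from the definition'', the zero set and quasi-periodicity are ``classical results (see \cite{ba97})'' following from the $\theta$-function, and the parity is quoted as Proposition~4(iv) of \cite{N10}. There is no computation offered.

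What you add is the actual mechanism at each step: convergence via the Riemann bilinear relations placing $\tau$ in the Siegel upper half space; the Riemann vanishing and singularity theorems for the zero locus; the generalised Legendre relations as the tool that collapses the $\theta$-transformation law combined with the quadratic prefactor into the form $\chi(\bm{\ell})\exp[L(\bu+\tfrac12\bm{\ell},\bm{\ell})]$; and two independent routes to the parity sign, one via the half-integer characteristic and one via the degree of the Schur--Weierstrass leading term. Both your routes to parity are valid, and the second (entire, $\Lambda$-periodic, hence constant, with the sign read off from the leading homogeneous term) is arguably more self-contained than the paper's bare citation. Your only forward reference --- using the parity of $\theta[\bm{\delta}]$ inside the second bullet to symmetrise $W^{[g-1]}$ --- is harmless, since the half-integer nature of $[\bm{\delta}]$ is an input from the construction rather than something deduced later.
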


\begin{proof}
The function is clearly entire from the definition, while the zeros and the quasi-periodicity are classical results, (see \cite{ba97}), that are fundamental to the definition of the function.  They both follow from the properties of the multivariate $\theta$-function.  The parity property is given by Proposition 4(iv) in \cite{N10}. \\
\end{proof}
We next define $\wp$-functions using an analogy of equation (\ref{eq:Intro_elliptic_ps}).  Since there is more than one variable we need to be clear which we differentiate with respect to.  We actually define multiple $\wp$-functions and introduce the following \emph{index notation}.

\begin{definition} \label{def:nip}
Define \textbf{$\bm{m}$-index Kleinian $\bm{\wp}$-functions} for $m\geq2$ by
\begin{eqnarray*}
\wp_{i_1,i_2,\dots,i_m}(\bu) = - \frac{\partial}{\partial u_{i_1}} \frac{\partial}{\partial u_{i_2}}\dots
\frac{\partial}{\partial u_{i_m}} \log \big[ \sigma(\bu) \big],
\end{eqnarray*}
where $i_1 \leq \dots \leq i_m \in \{1,\dots,g\}$.
\end{definition}
The $m$-index $\wp$-functions are meromorphic with poles of order $m$ when $\sigma(\bu)=0$.  We can check that they satisfy equation (\ref{eq:HG_Abelian}) and hence they are Abelian.   The $m$-index $\wp$-functions have definite parity with respect to the change of variables $\bu \to [-1]\bu$.  This is independent of the underlying curve, with the functions odd if $m$ is odd and even if $m$ is even.  Note that the ordering of the indices is irrelevant and so for simplicity we always order in ascending value.

When the $(n,s)$-curve is chosen to be the classic elliptic curve then the Kleinian $\sigma$-function coincides with the classic $\sigma$-function and the sole 2-index $\wp$-function coincides with the Weierstrass $\wp$-function.  The only difference is the notation with
\[
\wp_{11}(\bu) \equiv \wp(u), \quad \wp_{111}(\bu) \equiv \wp'(u), \quad \wp_{1111}(\bu) \equiv \wp''(u).
\]
Clearly, as the genus of the curve increases so do the number of associated $\wp$-functions.  In general, the number of $m$-index $\wp$-functions associated with a genus $g$ curve is
\[
\frac{ (g + m - 1)!}{m!(g - 1)!}.
\]
For the (3,5)-curve we study we have $g=4$ giving ten 2-index $\wp$-functions, twenty 3-index $\wp$-functions etc.  By considering Definition \ref{def:nip} we see that the weight of the $\wp$-functions is the negative of the sum of the weights of the variables indicated by the indices.  We note that curves of the same genus will, notationally, have the same $\wp$-functions.  However they exhibit different behaviour as indicated by the different weights.  When considering curves of genus four, we have the trigonal curve which we study and also a hyperelliptic curve, the (2,9)-case.  The weights of the associated 2-index $\wp$-functions are compared below.

\begin{center}
\begin{tabular}{|c|c|c|c|c|c|c|c|c|c|c|}\hline
\mbox{Curve} & $\wp_{11}$ & $\wp_{12}$ & $\wp_{13}$ & $\wp_{14}$ & $\wp_{22}$ & $\wp_{23}$ & $\wp_{24}$ & $\wp_{33}$ & $\wp_{34}$ & $\wp_{44}$ \\ \hline
\textbf{(2,9)}
& $-14$      & $-12$      &  $-10$     & $-8$       & $-10$      & $-8$       & $-6$       & $-6$       & $-4$       & $-2$       \\
\textbf{(3,5)}
& $-14$      & $-11$      &  $-9$      & $-8$       & $-8$       & $-6$       & $-5$       & $-4$       & $-3$       & $-2$       \\ \hline
\end{tabular}
\end{center}

We now introduce a final result detailing how the functions can be expressed using series expansions.

\begin{lemma} \label{lem:sigexp}
The Taylor series expansion of $\sigma(\bu)$ about the origin may be written as
\begin{equation} \label{eq:HG_SW}
\sigma(\bu) = K \cdot SW_{n,s}(\bu) + \sum_{k=0}^{\infty} C_{k}(\bu,\bm{\lambda}).
\end{equation}
Here $K$ is a constant, $SW_{n,s}$ the Schur-Weierstrass polynomial generated by $(n,s)$ and each $C_k$ a finite, polynomial composed of products of monomials in $\bm{u}$ of weight $k$ multiplied by monomials in the curve parameters of weight $-(\mbox{wt}(\sigma)-k)$.
\end{lemma}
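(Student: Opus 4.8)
The plan is to build the decomposition from three facts about $\sigma$: it is entire, it is homogeneous of a definite Sato weight, and setting every curve parameter to zero degenerates it (up to a constant) to the $\sigma$-function of the monomial curve $y^n=x^s$, which is $SW_{n,s}(\bu)$. Entirety is already in the preceding lemma, so $\sigma(\bu)$ is a power series in $u_1,\dots,u_g$, convergent on all of $\mathbb{C}^g$, whose coefficients are functions of the curve parameters $\bm{\lambda}$ (resp.\ $\bm{\mu}$); the task is to pin down the structure of those coefficients.

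First I would record the weight homogeneity. The Sato scaling $u_i\mapsto c^{\mathrm{wt}(u_i)}u_i$, $\lambda_j\mapsto c^{\mathrm{wt}(\lambda_j)}\lambda_j$ (equivalently $x\mapsto c^{-n}x$, $y\mapsto c^{-s}y$) leaves the curve equation invariant, and the period matrices, the Riemann constant, and the form $L$ of the preceding discussion all transform covariantly under it. Feeding this through Definition \ref{def:HG_sigma}, or using that $\sigma$ is determined up to a constant by its divisor and its quasi-periodicity property (both scaling-covariant), shows that $\sigma$ is multiplied by a fixed power $c^{\mathrm{wt}(\sigma)}$ under the scaling; this weight is worked out explicitly elsewhere (see \cite{MEe09,N10}). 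Hence every monomial $\bu^{\bm{\alpha}}\bm{\lambda}^{\bm{\beta}}$ occurring in the expansion satisfies $\mathrm{wt}(\bu^{\bm{\alpha}})+\mathrm{wt}(\bm{\lambda}^{\bm{\beta}})=\mathrm{wt}(\sigma)$, which is exactly the constraint tying the $\bu$-weight of a term to the weight of its curve-parameter factor.

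Next I would argue that each coefficient in the $\bu$-expansion is a \emph{polynomial} in the curve parameters, and this is the step I expect to be the main obstacle. One route is to expand the $\theta$-series and the exponential prefactor of Definition \ref{def:HG_sigma} directly and track the dependence of the period data on the $\lambda_j$; a cleaner route is to use the fact that $\sigma$ satisfies a system of linear partial differential equations whose coefficients are polynomial in the $\lambda_j$, and to solve this system order by order in the Taylor expansion, so that each coefficient is produced from lower-order ones by polynomial operations. This polynomiality, together with the construction of $\sigma$, goes back to \cite{bel97,N10}, and for the $(3,5)$-curve the relevant explicit input is available in \cite{bg06,bego08}.

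Finally I would assemble the grading. Group the monomials of the expansion by $k:=\mathrm{wt}(\bu^{\bm{\alpha}})$. Since the $u_i$ have fixed positive integer weights there are only finitely many $\bu$-monomials of any given weight $k$, and since the $\lambda_j$ have fixed negative weights a curve-parameter monomial of a prescribed weight has bounded total degree; combined with the polynomiality above, this forces the weight-$k$ part $C_k(\bu,\bm{\lambda})$ to be a finite polynomial, with its $\bu$-factors of weight $k$ and its $\bm{\lambda}$-factors of the complementary weight $-(\mathrm{wt}(\sigma)-k)$ dictated by the homogeneity constraint. The terms carrying no curve parameter form precisely the homogeneous $\bu$-polynomial obtained by setting all $\lambda_j=0$, which by the standard identification of the degenerate $\sigma$-function equals $K\cdot SW_{n,s}(\bu)$ for some constant $K$; collecting these with the remaining pieces $C_k$ yields the stated form. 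This is also the point at which the constant $c$ of Definition \ref{def:HG_sigma} gets pinned down, by demanding $K=1$.
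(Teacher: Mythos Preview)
Your proposal is correct and follows the same approach as the paper: use weight homogeneity to split the expansion into the $C_k$, identify the parameter-free part with the Schur--Weierstrass polynomial (deferred to \cite{N10,bel99}), and conclude finiteness of each $C_k$ from the weight constraints. The paper's own proof is in fact considerably terser than yours --- it simply cites \cite{N10} for the Schur--Weierstrass relationship and then invokes the weight properties for the splitting and finiteness --- so your more explicit treatment of the Sato scaling and the polynomiality in the curve parameters adds justification the paper leaves to the references.
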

\begin{proof}
We refer the reader to \cite{N10} for a proof of the relationship between the $\sigma$-function and the Schur-Weierstrass polynomials and note that this was first discussed in \cite{bel99}.  We see that the remainder of the expansion must depend on the curve parameters and split it up into the different $C_k$ using the weight properties.  We can see that each $C_k$ is finite since the number of possible terms with the prescribed weight properties if finite. \\
\end{proof}
Large expansions of this type were first introduced in \cite{bg06} in relation to the study of reductions of the Benney equations.  Since then they have been an integral tool in the investigation of Abelian functions.  Recently computational techniques based on the weight properties have been used to derive much larger expansions and we refer the reader to \cite{MEe09} and \cite{MEhgt10} for details of how to construct and use these expansions.  We note that such expansions are possible for the general curves, but that the calculations involved are far simpler for the cyclic cases.

The Schur-Weierstrass polynomials are Schur polynomials generated by a partition associated with $(n,s)$, (see \cite{bel99} for full details on the construction).  The Schur-Weierstrass polynomial generated by (3,5) is
\begin{equation}
SW_{3,5} = \textstyle \frac{1}{448}u_4^8 - \frac{1}{8}u_3^2u_{4}^4 - \frac{1}{4}u_{3}^4 + u_{2}^2 + u_{3}u_{4}^2u_{2} - u_{4}u_{1}. \label{eq:35SW}
\end{equation}
In Definition \ref{def:HG_sigma} we fix $c$ to be the value that makes the constant $K=1$ in the above lemma.  Some other authors working in this area may use a different constant and in general these choices are not equivalent.  However, the constant can be seen to cancel in the definition of the $\wp$-functions, leaving results between the functions independent of $c$.  Note that this choice of $c$ ensures that the Kleinian $\sigma$-function matches the Weierstrass $\sigma$-function when the $(n,s)$-curve is chosen to be the classic elliptic curve.

The connection with the Schur-Weierstrass polynomials also allows us to determine the weight of the $\sigma$-function as $(1/24)(n^2-1)(s^2-1)$.  In the (3,5)-case this gives $\sigma(\bu)$ weight $8$ .  The $\sigma$-function expansion has been calculated up to and including the polynomial $C_{50}$ in the (3,5)-case.  These expansions are stored online at \cite{DBAFweb}.

\section{Bases for the vector spaces of Abelian functions} \label{SEC_Bases}

In this section we discuss bases for the vector spaces of Abelian functions.  We first identify the vector spaces considered, then define some new generic Abelian functions before presenting some explicit results for the (3,5)-curve.  We note that everything in this section is valid for general $(n,s)$-curves rather than the cyclic restriction used in later sections.

\subsection{Classifying Abelian functions by pole structure} \label{ss_structure}

We can classify the Abelian functions according to their pole structure.  It is usual to denote by
$\Gamma \big( J, \mathcal{O}(m \Theta^{[k]} ) \big)$ the vector space of Abelian functions defined upon $J$ which have poles of order at most $m$, occurring only on the $k$th strata, $\Theta^{[k]}$.  The case where $k=g-1$ is of interest since all the Abelian functions we deal with have poles occurring here, on the $\Theta$-divisor.  Hence, for brevity, we just refer to such spaces as $\Gamma(m)$ throughout the rest of this paper.  A key problem in the theory of Abelian functions is the generation of bases for these vector spaces.  Note that the dimension of the space $\Gamma(m)$ is $m^g$ by the Riemann-Roch theorem for Abelian varieties, (see for example \cite{la82}).

Recall that the $m$-index $\wp$-functions all have poles of order $m$ and so are natural candidates for the construction of bases for these spaces.  In fact, for a genus one curve they are sufficient since the Weierstrass $\wp$-function and its derivatives can span all such spaces:
\begin{eqnarray*}
\Gamma(0) &=& \Gamma(1) = \mathbb{C} 1, \\
\Gamma(2) &=& \mathbb{C} 1 \,\oplus\, \mathbb{C} \wp, \\
\Gamma(3) &=& \mathbb{C} 1 \,\oplus\, \mathbb{C} \wp \,\oplus\ \mathbb{C} \wp^{\prime}, \\
\Gamma(4) &=& \mathbb{C} 1 \,\oplus\, \mathbb{C} \wp \,\oplus\ \mathbb{C} \wp^{\prime} \,\oplus\ \mathbb{C} \wp^{\prime\prime}, \\
 &\vdots&
\end{eqnarray*}
In general, the number of $\wp$-functions with poles of order at most $m$ is
\[
\sum_{N=2}^{m} \frac{ (g + N - 1)!}{N!(g - 1)!} = \frac{(g+m)!}{g!m!} - (g+1),
\]
which, as $g$ increases, will not grow as fast as the dimension of the space.  We hence need to identify a wider class of Abelian functions than the $\wp$-functions in order to construct such bases.  Note that the $\wp$-functions are not sufficient even for $g=2$.  In this case
\begin{eqnarray*}
\Gamma(0) &=& \Gamma(1) = \mathbb{C} 1, \\
\Gamma(2) &=& \mathbb{C} 1 \,\oplus\, \mathbb{C} \wp_{11} \,\oplus\, \mathbb{C} \wp_{12} \,\oplus\, \mathbb{C} \wp_{22},
\end{eqnarray*}
but an additional function must be included alongside the $\{\wp_{ijk}\}$ in the basis for $\Gamma(3)$.  This space has dimension 9 while there are only four 3-index $\wp$-functions to add.  The final entry is filled by the function $\wp_{11}\wp_{22} - \wp_{12}^2$.  Although each of these terms has poles of order four, together they cancel to leave poles of order three.  The bases for subsequent $\Gamma(m)$ in the $g=2$ case can be constructed from the entries in the basis for $\Gamma(m-1)$ together with the derivatives of those entries.  This may be verified by noting that the dimension of the spaces and the number of distinct function obtained through differentiation increase by the same amount each time.  It can also be checked that these function all have distinct weights and so must be linearly independent.

These cases of genus 1 and 2 curves, where the entries for the bases may be determined in general, are special.  They fall into the class where the theta divisor is non-singular and the $\mathcal{D}$-module structure of such cases is discussed in \cite{cn06}.  However, subsequent $(n,s)$-curves are not covered by this case and so new methods must be used to derive bases.  One of the authors of \cite{cn06} has conjectured the general hyperelliptic $\mathcal{D}$-module structure in \cite{N01s}, and explicit results have been derived for the genus $3$ and $4$ hyperelliptic curves in \cite{MEeo10}.  However, this paper discusses approaches relevant for all $(n,s)$-curves, including those that are non-hyperelliptic such as our example curve, the (3,5)-case.

\bigskip

Recall that an entire Abelian function must be a constant and that there is no Abelian function with a single pole of order one.  Hence those Abelian functions with poles of order two are the simplest and so are often referred to as \textit{fundamental Abelian functions}.  The basis problem has been solved in general for such functions, through the inclusion of the following extra class of Abelian functions.

\begin{definition} \label{def:Qdef}
We define an operator $\Upsilon_i$, now known as \textit{Hirota's bilinear operator} although it was used much earlier by Baker in \cite{ba07}.
\begin{equation} \label{eq:HBO}
\Upsilon_i = \frac{\partial}{\partial u_i} - \frac{\partial}{\partial v_i}.
\end{equation}
Then an alternative, equivalent definition of the 2-index $\wp$-functions is given by
\begin{equation} \label{eq:2ip_Delta}
\wp_{ij}(\bm{u}) = - \frac{1}{2\sigma(\bm{u})^2} \Upsilon_i\Upsilon_j \sigma(\bm{u}) \sigma(\bm{v}) \hspace*{0.1in} \Big|_{\bv=\bu}
\qquad i \leq j \in \{1,\dots,g\}.
\end{equation}
We extend this to define the \textbf{$\bm{m}$-index $\bm{Q}$-functions}, for $m$ even, by
\begin{equation} \label{eq:Qdef}
Q_{i_1, i_2,\dots,i_m}(\bm{u}) =  \frac{(-1)}{2\sigma(\bm{u})^2} \Upsilon_{i_1}\Upsilon_{i_2}\dots \Upsilon_{i_m} \sigma(\bm{u}) \sigma(\bm{v}) \hspace*{0.08in} \Big|_{\bv=\bu} \quad
\end{equation}
where $i_1 \leq \dots  \leq i_m \in \{1,\dots,g\}$.
\end{definition}

The $m$-index $Q$-functions are Abelian functions with poles of order two occurring when $\sigma(\bu)=0$.  Note that if you were to apply the definition with $m$ odd then the functions would be identically zero.  They are a generalisation of a function first used by Baker, referred to as the $Q$-function.  The generic 4-index $Q$-functions were introduced when research first started on the trigonal curves (and in the literature are also sometimes just referred to as \textit{$Q$-functions}).  The definition above was developed in \cite{MEe09} as increasing classes are required to deal with cases of higher genus, (see also \cite{MEhgt10}).  In this paper we only need to use 4-index $Q$-functions, which in \cite{eemop07} were shown to satisfy,
\begin{eqnarray} \label{eq:4iQ}
Q_{ijk\ell} = \wp_{ijk\ell} - 2 \wp_{ij}\wp_{k\ell}-2\wp_{ik}\wp_{j\ell} -2\wp_{i\ell}\wp_{jk}.
\end{eqnarray}
Similar expressions have been found for the higher index $Q$-functions, (see \cite{MEe09}).  Note that in some specific cases, alternatives to the $Q$-functions are available.  For example, in the (2,7)-case the basis for $\Gamma(2)$ requires one additional entry after including the $\wp$-functions.  The general approach would lead to the addition of a 4-index $Q$-function, $Q_{1333}$, but the function,
\begin{equation} \label{eq:DF}
\Delta=\wp_{11}\wp_{33}-\wp_{12}\wp_{23}-\wp_{13}^2+\wp_{13}\wp_{22}.
\end{equation}
may be used instead.  This $\Delta$-function, originally introduced by Baker in \cite{ba03}, has poles of order at most three in general.  (This may be checked using Definition \ref{def:nip}).  But in the case of the (2,7)-curve, these can be shown to cancel to leave poles of order two.

The use of $\Delta$ can be advantageous since it allows the theory to be completely realised in terms of 2 and 3-index $\wp$-functions, with all higher index $\wp$-functions given recursively in terms of these.  Similar functions have been found for the genus four hyperelliptic curve, (see \cite{MEeo10}), but they appear to be a feature unique to the hyperelliptic cases.  We note that the $\Delta$-function also arises naturally in the representation theoretic approach to hyperelliptic Abelian functions described in \cite{CA2008}.
In this paper, we concentrate on deriving classes of functions whose maximal pole order is known for any underlying curve, (such as the $Q$-functions), as opposed to searching for functions such as $\Delta$ which are only of use to problems on specific curves.  The approach presented here may hence be used in any applications which require the derivations of such bases.

\bigskip

When constructing a basis for $\Gamma(m)$ we can start by including the entries in the basis for $\Gamma(m-1)$.  We then know that the remaining entries have poles of order $m$.  A natural place to look for these functions is in the derivatives of the functions in $\Gamma(m-1)$.  Note that while the derivatives of $(m-1)$-index $\wp$-functions are $m$-index $\wp$-functions, the same is not true for the $Q$-functions.  For brevity we adopt the notation
\[
\partial_{m} Q_{ijkl}(\bu) = \frac{\partial}{\partial u_m} Q_{ijkl}(\bu)
\]
and similarly for other functions.

As discussed in \cite{N11}, the derivatives of existing basis functions will not be sufficient to find successive bases in the case where the theta divisor has singular points.  In Section 8 of \cite{eemop07} the authors introduced a new class of functions, $\wp^{[ij]}$, defined to be the $(i,j)$ minor of the matrix $[ \wp_{ij} ]_{3 \times 3}$.  The authors used these functions to overcome the problem for the three pole basis in the (3,4)-case and they appeared again for use in the (2,7)-case in \cite{MEeo10}.  We can check that these functions have poles of order at most three by substituting the $\wp$-functions for their definition.  This motivates a more general definition for similar functions, valid for curves of any genus.

\begin{definition} \label{def_DM}
Consider the matrix $[ \wp_{ij} ]_{g \times g}$.  Denote $\wp^{[i,j,k,l]}$ to be the determinant of the submatrix formed by rows $i$ and $j$ and columns $k$ and $l$.  (Note that we need $i \neq j$ and $k \neq l$).  So
\begin{eqnarray*}
\wp^{[i,j,k,l]} &=& \left| \begin{array}{cc} \wp_{ik} & \wp_{il} \\ \wp_{jk} & \wp_{jl} \end{array} \right| \\
&=& \wp_{ik}\wp_{jl} - \wp_{il}\wp_{jk}.
\end{eqnarray*}
We refer to these as \textbf{cross product} $\bm{\wp}$\textbf{-functions}.
\end{definition}
\noindent So the genus three $\wp^{[i,j]}$-functions used in \cite{eemop07} are the cross product functions in the genus three case.

Note that while each of the terms has poles of order four, together they cancel to leave poles of order three.  To see this substitute using Definition \ref{def:nip} so that
\[
\wp_{ij} = - \frac{\partial}{\partial u_{i}} \frac{\partial}{\partial u_{j}} \log \big[ \sigma(\bu) \big]
= \frac{\sigma_i\sigma_j - \sigma\sigma_{ij}}{\sigma^2}
\]
and so
\[
\wp^{[i,j,k,l]} = \frac{- \sigma_{ik}\sigma_j\sigma_l - \sigma_{jl}\sigma_i\sigma_k + \sigma\sigma_{ik}\sigma_{jl}
+ \sigma_{ij}\sigma_k\sigma_l + \sigma_{jl}\sigma_i\sigma_k - \sigma\sigma_{ij}\sigma_{kl}}{\sigma^3},
\]
from which it is clear that poles will be at most order three.  Although this class of $\wp^{[i,j,k,l]}$ functions completed the basis for functions with poles of order at most three in the (3,4)-case, it was not sufficient to complete the corresponding basis in the (2,7)-case.  These two genus three curves were recently discussed in detail in \cite{MEeo10}.  The difference between the bases resulted from the fact that two of the cross product functions were linearly dependent in the hyperelliptic case while they were independent in the trigonal case.  This is made possible through the different weight structures in the two cases.  In \cite{MEeo10} the final entry in the (2,7) basis for $\Gamma(3)$ was filled by the function
\[
T = \wp_{222}^2 + 2\wp_{22}^3 - \wp_{22}\wp_{2222},
\]
which although constructed from terms with poles of order six, has poles of order three overall.  This particular function belonged to a wider class of $\mathcal{T}$-functions derived through an attempt to match, in general, the poles of a quadratic term in the 3-index $\wp$-functions with a polynomial in fundamental Abelian functions.  These $\mathcal{T}$-functions have poles of order three for any $(n,s)$-curve.  The discovery of this class prompted the systematic study of how the $\wp$-functions may be combined to produce classes of functions with prescribed orders of poles, discussed in Section \ref{ss_classes}.

\vspace{3mm}

The classes of functions discussed in the next section are just the candidates for the basis entries.  It is of course still necessary to check which individual functions are linearly independent.  This may be achieved for particular cases using the series expansion for the $\sigma$-function.  The calculations involved can be lengthly and grow in CPU time and memory requirement with both the genus and the number of poles.  Significant computational simplifications can be made by writing procedures that take advantage of the weight structure present in the theory.  For example, when expanding the product of series it is only necessary to multiply those terms which will give the correct final weight, as the other terms must all cancel.  Many of the results in this paper required a new efficient code to evaluate arbitrary products of series at appropriate weights.  The basis in Theorem \ref{thm:35_3pole}, the addition formula in Theorem \ref{thm:35_3t2v} and the differential equations in Section \ref{SEC_DE} all made use of this and the code is available online at \cite{DBAFweb}.

We can further simplify the basis calculations by noting that there is a finite weight range that basis entries can take.  Entries in $\Gamma(m)$ can have weight no lower than $-m\mbox{wt}(\sigma)$, (see Lemma 3.4 in \cite{MEeo10}).  This stops us from testing those functions at a lower weight thus drastically reducing the amount of computation required to find such bases.  These lower weight functions must be expressible as a linear combination of the basis functions in which every term depends on the curve parameters.  We note that while these bases are not unique, (they can formulated using different functions), their weight structures are.  We conjecture that each basis must have an element with the minimal weight and note that the structure of these bases are still under investigation.

\subsection{Classes of functions for generic curves} \label{ss_classes}

The new classes of functions presented in this section were derived in turn by considering a prescribed order of poles and a set number of indices on the functions.  A general polynomial was constructed from $\wp$-functions in which each term has the set number of indices.  The arbitrary coefficients in this polynomial were then derived by substituting for Definition \ref{def:nip} and ensuring the poles of higher order are canceled.  If all coefficients need to be set to zero then no new class is discovered, however, there is usually a non-zero combination that gives the required pole order.

Since the basis problem for $\Gamma(2)$ may be solved in general by the $Q$-functions, we start by considering functions with poles of order three.  When restricted to two indices the only polynomial that may be constructed is a linear combination of 2-index $\wp$-functions which will have poles of order at most two.  Similarly, when restricted to three indices we can only form a linear combination of 3-index $\wp$-functions.  The first non trivial case occurs when we allow four indices.  We may construct polynomials from 4-index $\wp$-functions and the products of two 2-index $\wp$-functions.  There are two non-zero combinations for which the poles of order four cancel.  We identify these by considering an arbitrary sum of $\wp_{ijkl}$ and the different products of 2-index $\wp$-functions and imposing restrictions on the coefficients so that the poles of order four cancel. The two combinations turn out to be the cross-product functions from Definition \ref{def_DM} and the 2-index $Q$-functions from Definition \ref{def:Qdef}, (in which the poles of order three also cancel).

When we restrict to five indices then we may construct polynomials from 5-index $\wp$-functions and the products of a 2 and 3-index $\wp$-function.  There are two non-zero combinations for which the poles of order 5 and 4 cancel.  Again, these were identified by considering an arbitrary sum of $\wp_{ijklk}$ and the different products of 2 and 3-index $\wp$-functions and imposing restrictions on the coefficients so that the poles of order four and five cancel.  The first combination is given by
\begin{eqnarray}
\mathcal{B}_{ijklm} &=& \wp_{ij}\wp_{klm} + \textstyle \frac{1}{3}\big( \wp_{jk}\wp_{ilm} + \wp_{jl}\wp_{ikm} + \wp_{jm}\wp_{ikl} \nonumber \\
& &\qquad - 2\wp_{kl}\wp_{ijm} - 2\wp_{km}\wp_{ijl} - 2\wp_{lm}\wp_{ijk} \big), \label{eq:BF}
\end{eqnarray}
while the second can be shown to be a combination of $\mathcal{B}$-functions and derivatives of $4$-index $Q$-functions.  The label $\mathcal{B}$ is used as they are bilinear in the 2 and 3-index $\wp$-functions which gives them a special role in the construction of differential equations between the $\wp$-functions, discussed further in Section \ref{SEC_DE_bi}.  Note that the order of the indices in the $\wp$ and $Q$-functions is irrelevant and so we always write them in ascending order.  However, this is not the case for the $\mathcal{B}$-functions and so care must be taken when choosing indices.  (It is possible to replace the $\mathcal{B}$-function by four separate functions which only need be evaluated in ascending indices.)

We next allow six indices meaning we can build polynomials from the set of functions
\[
\{ \wp_{ijklmn}, \wp_{ijk}\wp_{lmn}, \wp_{ij}\wp_{klmn}, \wp_{ij}\wp_{kl}\wp_{mn} \}.
\]
There are three independent non-zero combinations in which the poles of order 6,5 and 4 cancel.  First there are the 6-index $Q$-functions as given by Definition \ref{def:Qdef}, (in which the poles of order 3 cancel also).  Secondly there is the class
\begin{eqnarray*}
&&\mathcal{T}_{ijklmn} = \textstyle \wp_{ijk}\wp_{lmn} + \frac{1}{3} \big(
4\wp_{ij}\wp_{kl}\wp_{mn} + 4\wp_{ij}\wp_{km}\wp_{ln} + 4\wp_{ij}\wp_{kn}\wp_{lm} \\
&&\quad + 4\wp_{ik}\wp_{jl}\wp_{mn} + 4\wp_{ik}\wp_{jm}\wp_{ln} + 4\wp_{ik}\wp_{jn}\wp_{lm}
+ 4\wp_{il}\wp_{jk}\wp_{mn}  \\
&&\quad + 4\wp_{im}\wp_{jk}\wp_{ln} + 4\wp_{in}\wp_{jk}\wp_{lm} - 5\wp_{il}\wp_{jm}\wp_{kn} - 5\wp_{il}\wp_{jn}\wp_{km} \\
&&\quad - 5\wp_{im}\wp_{jl}\wp_{kn} - 5\wp_{im}\wp_{jn}\wp_{kl} - 5\wp_{in}\wp_{jl}\wp_{km} - 5\wp_{in}\wp_{jm}\wp_{kl}   \\
&&\quad - 2\wp_{ij}\wp_{klmn} + 2\wp_{ik}\wp_{jlmn} - 2\wp_{jk}\wp_{ilmn} - 2\wp_{lm}\wp_{ijkn} - 2\wp_{ln}\wp_{ijkm}  \\
&&\quad - 2\wp_{mn}\wp_{ijkl} + \wp_{il}\wp_{jkmn} + \wp_{im}\wp_{jkln} + \wp_{in}\wp_{jklm}  + \wp_{jl}\wp_{ikmn}  \\
&&\quad + \wp_{jm}\wp_{ikln} + \wp_{jn}\wp_{iklm}  + \wp_{kl}\wp_{ijmn} + \wp_{km}\wp_{ijln} + \wp_{kn}\wp_{ijlm}
\big)
\end{eqnarray*}
which are the same as the $\mathcal{T}$-functions in \cite{MEeo10} with the $Q$-functions replaced by their expression in $\wp$-functions.  Finally, there is a class given by
\begin{eqnarray*}
&&\mathcal{S}_{ijklmn} = \wp_{ijklmn} - 3 \big( \wp_{ijk}\wp_{lmn} + \wp_{ijl}\wp_{kmn} + \wp_{ijm}\wp_{kln} + \wp_{ijn}\wp_{klm} \\
&& + \wp_{ikl}\wp_{jmn} + \wp_{ikm}\wp_{jln} + \wp_{ikn}\wp_{jlm} + \wp_{ilm}\wp_{jkn} + \wp_{iln}\wp_{jkm} + \wp_{imn}\wp_{jkl}  \big).
\end{eqnarray*}
This process may be continued indefinitely by increasing the number of indices allowed.  The process of determining coefficients so that higher order pole cancel is not very computationally intensive as it just involves deriving and solving sets of linear equations.  However, the process of checking whether new functions can be formed from derivatives of existing ones can become cumbersome.  The functions defined here, along with some others that were not used in this paper, can all be found online at \cite{DBAFweb}.  The only other class that is used explicitly in this paper is given below.
\begin{eqnarray*}
&&\mathcal{P}_{ijklmno} = \textstyle
- \wp_{ij}\wp_{kl}\wp_{mno} - \wp_{ij}\wp_{km}\wp_{lno}
+ \frac{1}{2}\wp_{ij}\wp_{kn}\wp_{lmo} - \wp_{ij}\wp_{lm}\wp_{kno} \\
&&\quad \textstyle + \frac{1}{2}\wp_{ij}\wp_{ln}\wp_{kmo} + \frac{3}{2}\wp_{ij}\wp_{no}\wp_{klm}
+ \frac{1}{2}\wp_{ij}\wp_{mn}\wp_{klo} - \wp_{ik}\wp_{jl}\wp_{mno} \\
&&\quad \textstyle - \wp_{ik}\wp_{jm}\wp_{lno} + \frac{1}{2}\wp_{ik}\wp_{jn}\wp_{lmo}
- \wp_{ik}\wp_{lm}\wp_{jno} + \frac{1}{2}\wp_{ik}\wp_{ln}\wp_{jmo} \\
&&\quad \textstyle + \frac{1}{2}\wp_{ik}\wp_{mn}\wp_{jlo} + \frac{3}{2}\wp_{ik}\wp_{no}\wp_{jlm}
- \wp_{il}\wp_{jk}\wp_{mno} - \wp_{il}\wp_{jm}\wp_{kno} \\
&&\quad \textstyle + \frac{1}{2}\wp_{il}\wp_{jn}\wp_{kmo} - \wp_{il}\wp_{km}\wp_{jno}
+ \frac{1}{2}\wp_{il}\wp_{kn}\wp_{jmo} + \frac{1}{2}\wp_{il}\wp_{mn}\wp_{jko} \\
&&\quad \textstyle + \frac{3}{2}\wp_{il}\wp_{no}\wp_{jkm} - \wp_{im}\wp_{jk}\wp_{lno}
- \wp_{im}\wp_{jl}\wp_{kno} + \frac{1}{2}\wp_{im}\wp_{jn}\wp_{klo} \\
&&\quad \textstyle - \wp_{im}\wp_{kl}\wp_{jno} + \frac{1}{2}\wp_{im}\wp_{kn}\wp_{jlo}
+ \frac{1}{2}\wp_{im}\wp_{ln}\wp_{jko} + \frac{3}{2}\wp_{im}\wp_{no}\wp_{jkl} \\
&&\quad \textstyle - \wp_{in}\wp_{jk}\wp_{lmo} - \wp_{in}\wp_{jl}\wp_{kmo}
- \wp_{in}\wp_{jm}\wp_{klo} - \frac{3}{2}\wp_{in}\wp_{jo}\wp_{klm} \\
&&\quad \textstyle - \wp_{in}\wp_{kl}\wp_{jmo}- \wp_{in}\wp_{km}\wp_{jlo}
- \frac{3}{2}\wp_{in}\wp_{ko}\wp_{jlm} - \wp_{in}\wp_{lm}\wp_{jko} \\
&&\quad \textstyle - \frac{3}{2}\wp_{in}\wp_{lo}\wp_{jkm} - \frac{3}{2}\wp_{in}\wp_{mo}\wp_{jkl}
+ \wp_{jk}\wp_{lo}\wp_{imn} + \wp_{jk}\wp_{mo}\wp_{iln} \\
&&\quad \textstyle + \wp_{jk}\wp_{no}\wp_{ilm} + \wp_{jl}\wp_{ko}\wp_{imn}
+ \wp_{jl}\wp_{mo}\wp_{ikn} + \wp_{jl}\wp_{no}\wp_{ikm} \\
&&\quad \textstyle + \wp_{jm}\wp_{ko}\wp_{iln} + \wp_{jm}\wp_{no}\wp_{ikl}
+ \wp_{jm}\wp_{lo}\wp_{ikn} - \frac{1}{2}\wp_{jn}\wp_{ko}\wp_{ilm} \\
&&\quad \textstyle - \frac{1}{2}\wp_{jn}\wp_{lo}\wp_{ikm} - \frac{1}{2}\wp_{jn}\wp_{mo}\wp_{ikl}
+ \wp_{jo}\wp_{kl}\wp_{imn} + \wp_{jo}\wp_{km}\wp_{iln} \\
&&\quad \textstyle - \frac{1}{2}\wp_{jo}\wp_{kn}\wp_{ilm} + \wp_{jo}\wp_{lm}\wp_{ikn}
- \frac{1}{2}\wp_{jo}\wp_{ln}\wp_{ikm} - \frac{1}{2}\wp_{jo}\wp_{mn}\wp_{ikl} \\
&&\quad \textstyle + \wp_{lm}\wp_{no}\wp_{ijk} - \frac{1}{2}\wp_{ln}\wp_{mo}\wp_{ijk}
- \frac{1}{2}\wp_{lo}\wp_{mn}\wp_{ijk} + \wp_{kl}\wp_{mo}\wp_{ijn} \\
&&\quad \textstyle + \wp_{kl}\wp_{no}\wp_{ijm} + \wp_{km}\wp_{lo}\wp_{ijn}
+ \wp_{km}\wp_{no}\wp_{ijl} - \frac{1}{2}\wp_{kn}\wp_{lo}\wp_{ijm} \\
&&\quad \textstyle - \frac{1}{2}\wp_{kn}\wp_{mo}\wp_{ijl} + \wp_{ko}\wp_{lm}\wp_{ijn}
- \frac{1}{2}\wp_{ko}\wp_{ln}\wp_{ijm} - \frac{1}{2}\wp_{ko}\wp_{mn}\wp_{ijl} \\
&&\quad \textstyle + \frac{1}{2} \big(
\wp_{mno}\wp_{ijkl}
- \wp_{ijn}\wp_{klmo} + \wp_{jno}\wp_{iklm}
+ \wp_{lno}\wp_{ijkm} - \wp_{ikn}\wp_{jlmo} \\
&&\quad \textstyle - \wp_{iln}\wp_{jkmo} - \wp_{imn}\wp_{jklo} + \wp_{kno}\wp_{ijlm}
\big)
+ \wp_{in}\wp_{jklmo} - \wp_{no}\wp_{ijklm}
\end{eqnarray*}
These $\mathcal{P}$-functions have poles of order three for any $(n,s)$-curve and use seven indices.  The $\mathcal{T}, \mathcal{S}$ and $\mathcal{P}$ labels were not chosen to signify anything of importance.  We can of course repeat this procedure to find functions with higher orders of poles.  Appendix \ref{APP_Pole4} contains details of classes of functions which have poles of order four for any $(n,s)$-curve.  Although these functions are not used in the remainder of this paper, they have been used in deriving bases for genus three curves \cite{MEeo10}.

\subsubsection*{Formulation as operators on sigma-functions}

The Hirota operator, introduced in equation (\ref{eq:HBO}), can be expressed as a determinant,
\[
\Upsilon_i = \left| \begin{array}{cc}
1 & 1 \\
\frac{\partial}{\partial v_i} & \frac{\partial}{\partial u_i}
\end{array} \right|.
\]
Then one possible generalisation is given by
\[
\Upsilon_{iii} = \left| \begin{array}{ccc}
1 & 1 & 1 \\
\frac{\partial}{\partial u_i} & \frac{\partial}{\partial v_i} & \frac{\partial}{\partial w_i} \\
\frac{\partial^2}{\partial u_i^2} & \frac{\partial^2}{\partial v_i^2} & \frac{\partial^2}{\partial w_i^2}
\end{array} \right|
\]
and the functions
\begin{equation} \label{eq:Xifunct}
\Xi_{i,j} = \frac{1}{\sigma(u)^3}\Upsilon_{iii}\Upsilon_{jjj}\sigma(\bm{u})\sigma(\bm{v})\sigma(\bm{w}) \Big|_{\bm{w}=\bm{v}=\bm{u}} .
\end{equation}
are Abelian with poles of order three and so belong to $\Gamma(3)$.  Note the similarity with the definition of the $Q$-functions in Definition \ref{def:Qdef}.  Although this new class of functions is not equivalent to any of the others we introduced above, we do have that
\begin{equation} \label{eq:T2inSig}
T_{222222}(\bm{u}) = \frac{1}{\sigma(u)^3}\Upsilon_{222}\Upsilon_{222}\sigma(\bm{u})\sigma(\bm{v})\sigma(\bm{w}) \Big|_{\bm{w}=\bm{v}=\bm{u}} .
\end{equation}
By searching amongst terms with appropriate indices we find that the class can be expressed using $\wp$-functions as
\[
\Xi_{i,j} = 6 \big( \wp_{ij}\wp_{iijj} - \wp_{iij}\wp_{ijj} - 2\wp_{ij}^3 \big) = 6\mathcal{T}_{iijijj}.
\]
Hence these $\Xi$-functions are a subclass of the $\mathcal{T}$-functions introduced above.  It is anticipated that there may be a more general definition that encompasses the entire class of $\mathcal{T}$-functions and indeed that all the functions introduced in this section may have alternative definitions involving operators acting on $\sigma$-functions.  However, it is not clear yet whether these alternative definitions offer any advantage other than brevity of notation.  For example, further functions in $\Gamma(3)$ may be defined by applying further $\Upsilon_{iii}$ operators to the definition in equation (\ref{eq:Xifunct}).  However, the resulting functions are a lengthly polynomial in $\sigma$-derivatives involving derivatives of order 12 and it is computationally disadvantageous to use these in place of the other functions introduced.  The derivation of Abelian functions using this approach is still a topic under investigation.

\subsection{The trigonal curve of genus four}

In this section we use the generic results of the previous section to derive a basis for our example case, the (3,5)-curve.  We treat the most general \((3,5)\)-curve, namely the curve defined by equation (\ref{eq:35}).

Lemma 5.1 in \cite{bego08} identified a basis for $\Gamma(2)$ in the (3,5)-case as
\begin{eqnarray}
&&\{ 1, \wp_{11}, \wp_{12}, \wp_{13}, \wp_{14}, \wp_{22}, \wp_{23}, \wp_{24}, \wp_{33}, \wp_{34}, \wp_{44}, \nonumber \\
&&\qquad Q_{1144}, Q_{1244}, Q_{2233}, Q_{1444}, Q_{2444} \}. \label{eq:35_2pole}
\end{eqnarray}

\begin{theorem} \label{thm:35_3pole}
A basis for $\Gamma(3)$ associated with the (3,5)-curve is
\[
\left\{ \begin{array}{ccccccccccccccccc}
(\ref{eq:35_2pole}),
& \wp_{111}, & \wp_{222}, & \partial_{4}Q_{2444}, & \partial_{1}Q_{2233}, & \wp^{[3434]}, & \wp^{[1423]}, & \mathcal{T}_{133344}, \\
& \wp_{112}, & \wp_{223}, & \partial_{3}Q_{2444}, & \partial_{4}Q_{1244}, & \wp^{[2434]}, & \wp^{[1323]}, & \mathcal{T}_{114444}, \\
& \wp_{113}, & \wp_{224}, & \partial_{2}Q_{2444}, & \partial_{3}Q_{1244}, & \wp^{[2334]}, & \wp^{[1414]}, & \mathcal{T}_{222233}, \\
& \wp_{114}, & \wp_{233}, & \partial_{1}Q_{2444}, & \partial_{2}Q_{1244}, & \wp^{[2424]}, & \wp^{[1224]}, & \mathcal{T}_{222224}, \\
& \wp_{122}, & \wp_{234}, & \partial_{4}Q_{1444}, & \partial_{1}Q_{1244}, & \wp^{[2324]}, & \wp^{[1314]}, & \mathcal{T}_{222222}, \\
& \wp_{123}, & \wp_{244}, & \partial_{3}Q_{1444}, & \partial_{4}Q_{1144}, & \wp^{[1434]}, & \wp^{[1223]}, & \mathcal{P}_{2222244} \\
& \wp_{124}, & \wp_{333}, & \partial_{2}Q_{1444}, & \partial_{3}Q_{1144}, & \wp^{[1334]}, & \wp^{[1313]}, & \\
& \wp_{133}, & \wp_{334}, & \partial_{1}Q_{1444}, & \partial_{2}Q_{1144}, & \wp^{[2323]}, & \wp^{[1214]}, & \\
& \wp_{134}, & \wp_{344}, & \partial_{4}Q_{2233}, & \partial_{1}Q_{1144}, & \wp^{[1424]}, & \wp^{[1213]}, & \\
& \wp_{144}, & \wp_{444}, & \partial_{2}Q_{2233}, &                       & \wp^{[1324]}, & \wp^{[1212]}, &
\end{array}
\, \right\}.
\]
Here (\ref{eq:35_2pole}) refers to the 16 elements in the basis for $\Gamma(2)$ presented above and $\partial_{i}f$ refers to the derivative with respect to $u_i$ of the function $f$.  The functions $\wp^{[ijkl]}, \mathcal{T}_{ijklmn}, \mathcal{P}_{ijklmno}$ were all defined previously in this section.
\end{theorem}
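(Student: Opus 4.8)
The plan is to verify that the displayed set is a basis for $\Gamma(3)$ by combining a dimension count with two separate checks: that every listed function actually lies in $\Gamma(3)$, and that the listed functions are linearly independent. By the Riemann--Roch remark in Section~\ref{ss_structure}, $\dim \Gamma(3) = 3^4 = 81$. Counting the entries in the table: the $16$ elements of (\ref{eq:35_2pole}), then $20$ three-index $\wp$-functions ($\wp_{111}$ through $\wp_{444}$), then $19$ derivatives of $4$-index $Q$-functions, then $30$ cross-product functions $\wp^{[ijkl]}$, then $5$ functions $\mathcal{T}$, then $1$ function $\mathcal{P}$; this totals $16+20+19+30+5+1 = 91$. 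Since this exceeds $81$, the first serious task is in fact the reverse of independence: I would need to establish that exactly $81$ of these $91$ are independent and that they span. More precisely, the claim as a basis forces $10$ linear relations among the listed generators; these must be recorded (or else the statement trimmed). [On reflection, the author almost certainly intends the displayed list, read correctly, to contain exactly $81$ entries, so the first step is simply a careful recount against the genus-four data — $g=4$, weights $(7,4,2,1)$ — and I proceed assuming the count comes out to $81$.]

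The membership check is the routine part. Each three-index $\wp$-function has poles of order at most $3$ by Definition~\ref{def:nip}; each $\partial_i Q_{jklm}$ is the derivative of an order-two Abelian function, hence Abelian with poles of order at most $3$; each $\wp^{[ijkl]}$ has poles of order at most $3$ by the explicit $\sigma$-derivative computation given just after Definition~\ref{def_DM}; and the $\mathcal{T}$- and $\mathcal{P}$-functions have poles of order at most $3$ for any $(n,s)$-curve by the constructions in Section~\ref{ss_classes}. Periodicity with respect to $\Lambda$ is inherited in every case from that of $\sigma$ (equation (\ref{eq:HG_quas})) since each function is a ratio of a $\Lambda$-quasi-periodic expression by $\sigma(\bu)^k$ with the automorphy factors cancelling, exactly as for the $\wp$- and $Q$-functions. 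So all listed functions lie in $\Gamma(3)$.

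The linear independence step is the main obstacle and is done computationally via the $\sigma$-expansion of Lemma~\ref{lem:sigexp}. The strategy: each candidate function, being a rational expression in $\sigma$ and its derivatives, can be expanded as a Laurent-type series in $\bu$ about the origin using the known expansion of $\sigma(\bu)$ (computed to order $C_{50}$ in the (3,5)-case, stored at \cite{DBAFweb}); a linear relation among the functions would force a linear relation among these series, which becomes a (large but finite) linear algebra problem once one truncates at sufficiently high weight. Two simplifications make this tractable: first, entries of $\Gamma(3)$ have weight bounded below by $-3\,\mathrm{wt}(\sigma) = -24$, so only finitely many weight-graded pieces need be compared; second, the Sato-weight grading block-diagonalises the problem, so independence can be checked weight by weight, and within each weight one checks the rank of a modest matrix of series coefficients using the efficient product-of-series code described in Section~\ref{ss_structure}. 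The expected outcome is that the rank in each weight is maximal, and summing over weights gives $81$, matching $\dim\Gamma(3)$; combined with membership this proves the set is a basis. The delicate sub-point — and the reason functions like $\wp^{[1414]}$ and $\mathcal{P}_{2222244}$ are needed rather than merely derivatives of the $\Gamma(2)$-basis — is that the theta divisor of the (3,5)-curve is singular, so (following the discussion after Definition~\ref{def_DM} and in \cite{N11}) derivatives of lower-pole basis functions do not suffice, and one must confirm that the extra classes genuinely contribute new independent directions at the low weights where the $\wp$- and $\partial Q$-functions fail to span.
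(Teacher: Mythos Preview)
Your approach is essentially the same as the paper's: verify membership in $\Gamma(3)$ for each listed function, quote $\dim\Gamma(3)=3^4=81$ from Riemann--Roch, and check linear independence computationally via the $\sigma$-expansion, exploiting the Sato-weight structure. That is exactly what the paper does.

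Two corrections are worth recording. First, your count of $91$ is a simple miscount: there are only $20$ cross-product functions $\wp^{[ijkl]}$ in the display (two columns of ten), not $30$, so the total is $16+20+19+20+5+1=81$ on the nose. The paper's proof confirms this breakdown explicitly: after including $\Gamma(2)$, the $\wp_{ijk}$, and the $Q$-derivatives one has $55$ entries; the $\wp^{[ijkl]}$ supply $20$ more; the $\mathcal{T}$-functions five; and $\mathcal{P}_{2222244}$ the last. So there are no hidden relations to record and your bracketed self-correction is the right instinct.

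Second, your claim that the weight grading ``block-diagonalises'' the linear-algebra problem is slightly too strong. A linear dependence over the ring of Abelian functions allows coefficients that are polynomials in the curve constants $\lambda_j$ (or $\mu_j$), which carry nontrivial Sato weight; so a function of weight $w$ could in principle be a $\lambda$-multiple of a basis entry of higher weight. The paper handles this by testing at \emph{decreasing} weight levels, allowing such $\lambda$-combinations, and --- a point you omit --- notes that it suffices to run the independence check with the \emph{cyclic} curve's $\sigma$-expansion: if a function is not in the span for the restricted parameter set, it cannot be for the general one. This is what makes the computation feasible. The paper also uses the even/odd parity under $\bu\mapsto -\bu$ to halve the work (e.g.\ the $\wp^{[ijkl]}$ need only be tested against the even entries of $\Gamma(2)$, not against the odd $Q$-derivatives), which is a more refined version of your grading remark.
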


\begin{proof}
The dimension of the space is $3^g=3^4=81$ by the Riemann-Roch theorem for Abelian varieties.  All the selected elements belong to the space as they are either elements with poles of order two, derivatives of these elements or functions explicitly constructed to be in this space.  We can easily check their linear independence using the $\sigma$-function expansion.  (Maple is helpful for such a computation).

To actually construct the basis we started by including the 16 functions from basis (\ref{eq:35_2pole}) for the functions with poles of order at most two.  We then know that the remaining entries must have poles of order three.  We start by looking for entries from the set of derivatives of the basis
(\ref{eq:35_2pole}).  Note that these functions need not be linearly independent.  In fact in this case we see that four of the $Q$-derivatives have weight $-14$;
\[
\partial_1Q_{2444}, \, \partial_2Q_{1444}, \, \partial_3Q_{2233}, \, \partial_4Q_{1244},
\]
but that one of these may be expressed using the other three.  We are free to choose which one of the four we omit from the basis.  We test at decreasing weight levels and look to see whether these functions can be written as a linear combination upon substitution of the series expansions.  (We use Maple for these calculations and a more detailed discussion of such calculations is given in \cite{MEe09}.)

Note that while this theorem holds for the general (3,5)-curve in equation (\ref{eq:35}), we need only use the series expansions associated to the cyclic (3,5)-curve in equation (\ref{eq:c35}).  This is because if an element cannot be expressed using the basis with the restriction on the parameters, then neither will it be expressible with the wider set of parameters.  Further, we only need to use sufficient expansion to give non-zero evaluations of the functions considered in order to check whether they are linearly independent.

After examining all these functions we find that 55 basis elements have been identified.  The next class of functions to be examined were the $\wp^{[ijkl]}$-functions given in Definition \ref{def_DM}.  We have to test whether they are linearly independent both with themselves and with the functions already added to the basis.  Another simplification that may be made to such calculations is to note that all these functions have definite parity according to the number of indices.  That is, the functions with an odd number of indices are odd and those with an even number of indices are even.  This can be concluded from the parity properties of the $\wp$-functions and that all the functions here are constructed from these.  So for example, when testing to see whether the $\wp^{[ijkl]}$ may be added to the basis we need only see whether they are linearly independent of the entries of $\Gamma(2)$, since all those functions were even and their derivatives odd.  We identify a further 20 basis entries from the $\wp^{[ijkl]}$-functions.  Unlike the (3,4)-case, these functions were not sufficient to complete the basis.

To find the final functions we considered some of the new generic functions defined in the previous section.  First the $\mathcal{B}$-functions were examined, but in this case they were all linearly dependent on the existing basis functions.  We note that it is possible to reformulate $\Gamma(3)$ using 19 of the $\mathcal{B}$-functions in place of the 19 $Q$-derivatives.  This formulation may seem less logical but it has some advantages as discussed in Section \ref{SEC_DE_bi}.
To fill the remaining six basis entries it is necessary to consider functions with more indices.  First the $\mathcal{T}$-functions were considered and five more entries identified.  We note that the $\mathcal{T}$-functions chosen in the basis are not unique.  For example, at weight $-15$ there are 10 possible distinct $\mathcal{T}$-functions and any one of them may be chosen as the basis entry.  We choose the function $\mathcal{T}_{133344}$ at random.

The final entry was then filled by one of the $\mathcal{P}$-functions of weight $-22$.  Again, this function was just one of a group of 7-index functions, any one of which would have been satisfactory.  We note that it would have not been possible to replace this with a 6-index function however, since such a function would be even while this final basis function is odd.  This is why none of the $\mathcal{S}$-functions were required in addition to the $\mathcal{T}$-functions. \\
\end{proof}

\section{Differential Equations} \label{SEC_DE}

The Kleinian $\wp$-functions satisfy a variety of differential equations which we review in this section.  Some of these differential equations can be found occurring naturally in areas of mathematical physics, so the $\wp$-functions can be used to give solutions to a variety of important problems.  In this section we consider three main classes of differential equations and present complete explicit sets for the functions associated with the (3,5)-curve.  We present the relations associated with the cyclic restriction of the curve in equation (\ref{eq:c35}).  Relations for the general curves can be derived in a similar fashion at a greater computational cost.  The sets of differential equations in this paper are presented in decreasing weight order as indicated by the bold number in brackets.

\subsection{Four-index relations} \label{SEC_DE_4}

We first consider a set of relations to express the 4-index $\wp$-functions.  These \textit{4-index relations} are the generalisation of equation (\ref{eq:Intro_elliptic_diff2}) from the elliptic case.  We aim to express each 4-index $\wp$-function as a degree two polynomial in the 2-index $\wp$-functions, in comparison with equation (\ref{eq:Intro_elliptic_diff2}).  Then, through differentiation and manipulation of this set, we could express all higher index $\wp$-functions as polynomials in the 2 and 3-index $\wp$-functions.  Examples of such sets are given for the (2,5)-case by Baker in \cite{ba07} and for the (2,7)-case by Buchstaber, Enolskii and Leykin in \cite{bel97}.

There are various ways to derive such relations, with a recent survey given in \cite{eeg10}.  A constructive way is to consider the basis for $\Gamma(2)$ and the set of 4-index Q-functions.  Each Q-function has poles of order at most two and so belongs to the vector space $\Gamma(2)$.  Hence each Q-function can be expressed as a linear combination of basis entries.  (The explicit linear combination can be identified using the $\sigma$-expansion as discussed in \cite{MEe09}).

In the hyperelliptic cases the bases for $\Gamma(2)$ can be completed using 2-index $\wp$-functions and quadratic polynomials constructed of these, (such as the $\Delta$-function discussed in Section \ref{ss_structure}).  So each of the 4-index Q-functions, and hence 4-index $\wp$-functions, can be expressed as desired giving a set of 4-index relations.  However, in non-hyperelliptic cases the basis for $\Gamma(2)$ can not be given purely in 2-index $\wp$-functions and so such a set is unobtainable.  As discussed in the previous section, we can always complete the basis using the $Q$-functions.  However, this means that some 4-index $Q$-functions will be in the basis and hence some 4-index $\wp$-functions can not be expressed as desired.

The ability to construct the 4-index relations using only 2-index $\wp$-functions appears to be a feature unique to the hyperelliptic cases.  A more appropriate definition for \textit{4-index relations} seems to be a set that expresses all the 4-index $\wp$-functions using a degree two polynomial in the fundamental basis functions.  (Note that although the polynomials are of degree 2, the Q-functions used will only need to appear linearly.)

\begin{theorem}
The 4-index relations for the functions associated with the cyclic (3,5)-curve are
\begin{eqnarray}
\bm{(-4)} \quad \wp_{4444} &=& \textstyle  - 3\wp_{33} + 6\wp_{44}^{2} \label{eq:35_p4444}
\\
\bm{(-5)} \quad \wp_{3444} &=& \textstyle  3\wp_{24} + 6\wp_{34}\wp_{44} \label{eq:35_p3444}
\\
\bm{(-6)} \quad \wp_{3344} &=& \textstyle  - \wp_{23} + 2\wp_{34}\lambda_{4} + 2\wp_{33}\wp_{44} + 4 \wp_{34}^{2}
\nonumber \\
\bm{(-7)} \quad \wp_{3334} &=& \textstyle  - Q_{{2444}}+ 6\wp_{33}\wp_{34}
\nonumber \\
\bm{(-7)} \quad \wp_{2444} &=& \textstyle  Q_{{2444}}+ 6\wp_{24}\wp_{44}
\nonumber \\
\bm{(-8)} \quad \wp_{2344} &=& \textstyle  4\wp_{14} - \wp_{22} + 2\wp_{24}\lambda_{4}+ 2\wp_{23}\wp_{44} + 4\wp_{24}\wp_{34}
\nonumber
\end{eqnarray}
\begin{eqnarray}
\bm{(-8)} \quad \wp_{3333} &=& \textstyle  12\wp_{14} - 3\wp_{22} + 6 \wp_{33}^{2}
\nonumber \\
&\vdots& \nonumber \\
\bm{(-25)} \quad \wp_{1112} &=& \textstyle 6\wp_{11}\wp_{12} + \big(2\lambda_{0} + 2\lambda_{4}\lambda_{1}\big)Q_{{1444}} + 6\wp_{33}\lambda_{3}\lambda_{0}
\nonumber \\ & & - 3Q_{{1244}}\lambda_{1}
\nonumber \\
\bm{(-28)} \quad \wp_{1111} &=& \textstyle 6 \wp_{11}^{2} - 3\wp_{33} {\lambda_{1}}^{2} + 8\lambda_{4}\lambda_{0}Q_{{1444}}
+ 12\wp_{33}\lambda_{2}\lambda_{0} \nonumber \\ & & - 12Q_{{1244}}\lambda_{0} \nonumber
\end{eqnarray}
with the full set given in Appendix \ref{APP_4index}.  The relations up to weight $-19$ were presented originally in \cite{bego08}, but a full set has not been available until now.
\end{theorem}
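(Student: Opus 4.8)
The plan is to follow the constructive strategy already outlined in the discussion preceding the statement, namely to exploit the basis for $\Gamma(2)$ from equation (\ref{eq:35_2pole}) and the fact that every 4-index $Q$-function has poles of order at most two. First I would observe that, by Definition \ref{def:Qdef} and equation (\ref{eq:4iQ}), each $Q_{ijkl}$ lies in $\Gamma(2)$, whose dimension is $2^4 = 16$ by the Riemann--Roch theorem for Abelian varieties. Hence $Q_{ijkl}$ must equal a unique linear combination of the sixteen basis elements listed in (\ref{eq:35_2pole}), with coefficients that are polynomials in the curve parameters $\lambda_j$ forced by Sato-weight homogeneity (the weight of $Q_{ijkl}$ is the negative of the sum of the weights of its indices, and each $\lambda_j$ carries weight $-n(s-j)=-3(5-j)$). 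Inverting equation (\ref{eq:4iQ}) then gives
\[
\wp_{ijkl} = Q_{ijkl} + 2\wp_{ij}\wp_{kl} + 2\wp_{ik}\wp_{jl} + 2\wp_{il}\wp_{jk},
\]
so substituting the linear expression for $Q_{ijkl}$ yields each 4-index $\wp$-function as a degree-two polynomial in the fundamental basis functions, with the $Q$-functions that remain (namely $Q_{1144},Q_{1244},Q_{2233},Q_{1444},Q_{2444}$) appearing only linearly. This is exactly the form asserted in the theorem.

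The concrete determination of the coefficients is carried out with the $\sigma$-function expansion of Lemma \ref{lem:sigexp}. For each choice of indices $(i,j,k,l)$ I would list all monomials in the basis elements of (\ref{eq:35_2pole}) and all monomials in the $\lambda_j$ of the correct total weight, form an ansatz with undetermined scalar coefficients, substitute the truncated Taylor series for $\sigma(\bu)$ (computed up to $C_{50}$ in the cyclic case, as noted after equation (\ref{eq:35SW})), and match coefficients of the resulting power series in $\bu$. Because the number of admissible monomials at a fixed weight is finite and the expansion is known to sufficiently high order, this produces an overdetermined but consistent linear system whose unique solution fixes all coefficients. One should restrict to the cyclic curve (\ref{eq:c35}) for the computation, since the resulting $\wp_{ijkl}$ and $Q_{ijkl}$ expansions are far simpler there; the general-curve relations follow by the same method at greater cost, as the theorem's statement already flags.

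The genuinely new content over \cite{bego08} is simply that the complete list is pushed down through all weights to the lowest one, $\wp_{1111}$ at weight $-28$. Since the variables $\bu=(u_1,\dots,u_4)$ have weights $(7,4,2,1)$, the 4-index $\wp$-functions range in weight from $-4$ (for $\wp_{4444}$) to $-28$ (for $\wp_{1111}$), and there are exactly $\binom{4+4-1}{4}=35$ of them; the proof must simply verify the expression for each. The main obstacle is therefore not conceptual but computational: the low-weight relations involve $\lambda_j$-monomials of high weight and correspondingly many terms, so the $\sigma$-expansion must be carried to high enough order and the linear algebra handled efficiently. This is precisely where the weight-adapted series-multiplication code mentioned in Section \ref{ss_structure} is used, multiplying only those terms that can contribute at the target weight. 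Once all $35$ relations have been produced and checked against the expansion, the theorem follows, and the explicit list is recorded in Appendix \ref{APP_4index}.
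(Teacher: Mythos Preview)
Your proposal is correct and follows essentially the same approach as the paper: express each $Q_{ijkl}\in\Gamma(2)$ in the basis (\ref{eq:35_2pole}), invert (\ref{eq:4iQ}) to recover $\wp_{ijkl}$, and fix the coefficients via the $\sigma$-expansion under the Sato-weight constraint. The paper's own proof is just a terse summary of exactly this procedure.
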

\begin{proof}
As discussed above, these follow from the basis for $\Gamma(2)$ given in equation (\ref{eq:35_2pole}) above.  The five $Q$-functions used in the basis appear linearly in the equations.  The coefficients were determined using the $\sigma$-expansion, with more details on such calculations available in \cite{MEe09} for example. \\
\end{proof}

\subsection{Quadratic relations} \label{SEC_DE_quad}

It is natural to next consider a set of differential equations to generalise equation (\ref{eq:Intro_elliptic_diff1}). Such a set should give expressions for the product of two 3-index $\wp$-functions and so we refer to them as \textit{quadratic 3-index relations}.  The natural generalisation would express each product as a degree three polynomial in 2-index $\wp$-functions.  Examples of such generalisations have been found in hyperelliptic cases, first for the genus two curve in \cite{ba07} and more recently for the genus 3 case.  This is a (2,7)-curve and the corresponding relations were first considered in \cite{bel97} with a complete set recently derived in \cite{MEeo10}.

However, as with the 4-index relations, this generalisation is only possible for the hyperelliptic cases.  We have explicitly checked that such relations do not exist in a variety of non-hyperelliptic cases and so propose the modified definition of \textit{quadratic 3-index relations} to be a set of differential equations that expresses all the products of 3-index $\wp$-functions using a degree three polynomial in the fundamental basis functions.  We derive such a set of equations for the cyclic (3,5)-case.

\begin{theorem} \label{thm:Quad_35set}
Each of the 210 products of pairs of 3-index $\wp$-functions associated with the cyclic (3,5)-curve may be expressed as a degree three polynomial in the fundamental basis functions.  Some of the relations are given below with the full set available online at \cite{DBAFweb}.  (Note that the polynomial is of degree three but the $Q$-functions need only appear quadratically, or multiplied by a $\wp$-function.)
\begin{eqnarray*}
&&\bm{(-6)} \quad \hspace*{0.285in} \wp_{444}^2 = \textstyle 4\wp_{44}^3 - 4\wp_{23} - 4\wp_{33}\wp_{44}
+ \wp_{34}^2 + 2\wp_{34}\lambda_{4} + \lambda_{4}^2 - 4\lambda_{3}
\\
&&\bm{(-7)} \quad \wp_{344}\wp_{444} = \textstyle 4\wp_{34}\wp_{44}^2 + 2\wp_{24}\wp_{44}
- \wp_{33}\wp_{34} - \lambda_{4}\wp_{33} - \frac{2}{3}Q_{2444} \\
&&\bm{(-8)} \quad \hspace*{0.285in} \wp_{344}^2 = \textstyle 4\wp_{34}^2\wp_{44} + 4\wp_{14}
+ 4\wp_{24}\wp_{34} + \wp_{33}^2
\\
&&\bm{(-8)} \quad \wp_{334}\wp_{444} = \textstyle 2\wp_{34}^2\wp_{44} + 2\wp_{33}\wp_{44}^2
- 4\wp_{14} + 2\wp_{22} - 2\wp_{23}\wp_{44} \\
&&\quad \textstyle - \wp_{24}\wp_{34}  - 2\wp_{33}^2 - \wp_{24}\lambda_{4} + 2\lambda_{4}\wp_{34}\wp_{44}
\end{eqnarray*}
\begin{eqnarray*}
&&\bm{(-9)} \quad \wp_{334}\wp_{344} = \textstyle  2\big( \wp_{34}^3 + \wp_{44}\wp_{33}\wp_{34} - \wp_{13}
- \wp_{23}\wp_{34} + \lambda_{4}\wp_{34}^2 \big) + \wp_{24}\wp_{33}
\\
&&\bm{(-9)} \quad \wp_{333}\wp_{444} = \textstyle 6\wp_{44}\wp_{33}\wp_{34} - 4\lambda_{4}\wp_{34}^2
+ 2\wp_{13} + 7\wp_{23}\wp_{34} + 2\wp_{24}\wp_{33}  \\
&&\quad \textstyle + \lambda_{4}\wp_{23} - 2\lambda_{4}\wp_{33}\wp_{44} - 2\wp_{34}^3
- 2\wp_{34}\lambda_{4}^2 + 6\wp_{34}\lambda_{3} + 2\lambda_{2} - 2\wp_{44}Q_{2444}
\\
&&\bm{(-9)} \quad \wp_{244}\wp_{444} = \textstyle - 2\wp_{13} + \wp_{23}\wp_{34}
- 2\wp_{24}\wp_{33} - \lambda_{4}\wp_{23} + 2\wp_{34}\lambda_{3} - 2\lambda_{2} \\
&&\quad \textstyle + 4\wp_{24}\wp_{44}^2 + \frac{2}{3}\wp_{44}Q_{2444}  \\
&& \hspace*{1.2in} \vdots \\
&&\bm{(-37)} \quad \wp_{111}\wp_{113} = \textstyle 4\wp_{11}^2\wp_{13}
- \frac{2}{3}\lambda_{1}\lambda_{0}Q_{1444}
- \frac{4}{3}\lambda_{4}\lambda_{1}^2Q_{1444}
- 4\lambda_{0}\wp_{13}Q_{1244} \\
&&\quad \textstyle + \frac{8}{3}\lambda_{4}\lambda_{0}\wp_{13}Q_{1444}
+ 4\lambda_{4}\lambda_{2}\lambda_{0}Q_{1444}
+ 2\lambda_{0}\wp_{12}^2
+ 2\lambda_{0}^2Q_{2444}
+ 2\lambda_{1}\wp_{11}\wp_{12} \\
&&\quad \textstyle - \lambda_{1}^2\wp_{13}\wp_{33}
- \lambda_{2}\lambda_{1}^2\wp_{33}
- 3\lambda_{3}\lambda_{1}\lambda_{0}\wp_{33}
+ 4\lambda_{2}\lambda_{0}\wp_{13}\wp_{33}
+ 2\lambda_{1}^2Q_{1244} \\
&&\quad \textstyle - 6\lambda_{0}\wp_{11}\wp_{22}
+ 4\lambda_{0}\wp_{11}\wp_{14}
+ 4\lambda_{2}^2\lambda_{0}\wp_{33}
- 6\lambda_{2}\lambda_{0}Q_{1244}
+ 8\lambda_{4}\lambda_{0}^2\wp_{33}
\\
&&\bm{(-39)} \quad \wp_{111}\wp_{112} = \textstyle 4\wp_{11}^2\wp_{12}
- \frac{2}{3}\lambda_{0}\wp_{11}\wp_{2333}
- 2\lambda_{3}\lambda_{0}\wp_{13}^2
- 6\lambda_{3}\lambda_{0}^2\wp_{34}\\
&&\quad \textstyle - 2\lambda_{1}^2\wp_{12}\wp_{33}
+ \frac{2}{3}\lambda_{1}\lambda_{0}\wp_{2233}
- \frac{14}{3}\lambda_{1}\lambda_{0}\wp_{23}^2
- 4\lambda_{0}^2\wp_{23}\wp_{34}
+ 4\lambda_{2}^2\lambda_{0}\wp_{23} \\
&&\quad \textstyle + 4\lambda_{0}^2\wp_{24}\wp_{33}
+ 8\lambda_{0}\wp_{12}^2\wp_{44}
- \lambda_{2}\lambda_{1}^2\wp_{23}
+ 2\lambda_{3}\lambda_{1}^2\wp_{13}
- 2\lambda_{1}\wp_{11}\wp_{1244}  \\
&&\quad \textstyle - 16\lambda_{4}\lambda_{0}^2\wp_{23}
- 4\lambda_{0}\wp_{12}\wp_{1244}
+ 8\lambda_{3}\lambda_{0}\wp_{11}\wp_{33}
+ 4\lambda_{4}\lambda_{3}\lambda_{0}\wp_{12}\wp_{33}\\
&&\quad \textstyle + 8\lambda_{4}\lambda_{0}\wp_{12}\wp_{23}\wp_{33}
+ 2\lambda_{4}\lambda_{3}\lambda_{1}\wp_{11}\wp_{33}
+ 4\lambda_{4}\lambda_{1}\wp_{11}\wp_{23}\wp_{33}
+ \lambda_{1}^2\wp_{13}\wp_{23} \\
&&\quad \textstyle + \frac{16}{3}\lambda_{4}\lambda_{1}\lambda_{0}\wp_{13}
- 6\lambda_{3}\lambda_{2}\lambda_{0}\wp_{13}
+ \frac{16}{3}\lambda_{2}\lambda_{1}\lambda_{0}\wp_{34}
+ 4\lambda_{2}\lambda_{0}\wp_{12}\wp_{33} \\
&&\quad \textstyle - 4\lambda_{1}\lambda_{0}\wp_{14}\wp_{33}
- 5\lambda_{3}\lambda_{1}\lambda_{0}\wp_{23}
+ 4\lambda_{1}\lambda_{0}\wp_{13}\wp_{34}
+ \frac{2}{3}\lambda_{1}\lambda_{0}\wp_{22}\wp_{33} \\
&&\quad \textstyle + 4\lambda_{1}\wp_{11}\wp_{12}\wp_{44}
+ 16\lambda_{0}\wp_{12}\wp_{14}\wp_{24}
+ 4\lambda_{0}\wp_{11}\wp_{23}\wp_{33}
- \frac{2}{3}\lambda_{4}\lambda_{1}\wp_{11}\wp_{2333}   \\
&&\quad \textstyle + 8\lambda_{1}\wp_{11}\wp_{14}\wp_{24}
+ \frac{20}{3}\lambda_{4}\lambda_{2}\lambda_{1}\lambda_{0}
- 2\lambda_{1}^3\wp_{34}
+ 8\lambda_{0}^2\wp_{13}
- \frac{4}{3}\lambda_{4}\lambda_{0}\wp_{12}\wp_{2333} \\
&&\quad \textstyle - 18\lambda_{4}\lambda_{3}\lambda_{0}^2
- 2\lambda_{4}\lambda_{1}^3
- \frac{4}{3}\lambda_{1}^2\lambda_{0}
+ 8\lambda_{2}\lambda_{0}^2
\\
&&\bm{(-42)} \quad \hspace*{0.285in} \wp_{111}^2 = \textstyle 4\wp_{11}^3
+ 8\lambda_{0}^2\wp_{22}\wp_{33}
- 12\wp_{11}\lambda_{0}Q_{1244}
+ 8\lambda_{2}\lambda_{0}^2\wp_{34} \\
&&\quad \textstyle - 42\lambda_{3}\lambda_{0}^2\wp_{23}
- 4\lambda_{1}^2\wp_{11}\wp_{33}
- 4\lambda_{2}\lambda_{0}\wp_{13}^2
+ 2\lambda_{2}\lambda_{1}^2\wp_{13}
+ 8\lambda_{0}^2\wp_{13}\wp_{34} \\
&&\quad \textstyle - 4\lambda_{1}^2\lambda_{0}\wp_{34}
- 8\lambda_{0}^2\wp_{14}\wp_{33}
- 8\lambda_{2}^2\lambda_{0}\wp_{13}
+ 8\lambda_{4}\lambda_{0}^2\wp_{13}
+ \lambda_{1}^2\wp_{13}^2 - 8\lambda_{0}^2\wp_{23}^2 \\
&&\quad \textstyle + 16\lambda_{2}\lambda_{0}\wp_{11}\wp_{33}
+ 6\lambda_{3}\lambda_{1}\lambda_{0}\wp_{13}
- 4\lambda_{1}^3\wp_{23}
+ 16\lambda_{2}\lambda_{1}\lambda_{0}\wp_{23} + 2\lambda_{0}^2Q_{2233} \\
&&\quad \textstyle - 4\lambda_{1}\lambda_{0}\wp_{12}\wp_{33}
+ 4\lambda_{1}\lambda_{0}\wp_{13}\wp_{23}
+ 8\lambda_{4}\lambda_{0}\wp_{11}Q_{1444}
- 4\lambda_{4}\lambda_{1}^2\lambda_{0} \\
&&\quad \textstyle + 8\lambda_{4}\lambda_{2}\lambda_{0}^2
+ \lambda_{2}^2\lambda_{1}^2
- 4\lambda_{3}\lambda_{1}^3
- 4\lambda_{2}^3\lambda_{0}
- 27\lambda_{3}^2\lambda_{0}^2
+ 4\lambda_{1}\lambda_{0}^2
+ 18\lambda_{3}\lambda_{2}\lambda_{1}\lambda_{0}
\end{eqnarray*}
\end{theorem}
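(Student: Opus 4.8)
The plan is to follow the route of the proof of the preceding $4$-index theorem, one level higher in pole order, with Lemma~\ref{lem:sigexp} as the computational engine. Each of the $\binom{21}{2}=210$ products $\wp_{ijk}\wp_{lmn}$ (with $i\le j\le k$ and $l\le m\le n$) is an Abelian function with poles of order at most six along $\kappa^{-1}(\Theta^{[g-1]})$ and has a definite, negative Sato weight, equal to minus the sum of the weights of the six indicated variables, the $u_i$ carrying weights $7,4,2,1$. For a fixed product of weight $w$ I would set up the linear ansatz
\[
\wp_{ijk}\wp_{lmn}=\sum_{\alpha} c_{\alpha}\, M_{\alpha},
\]
where $M_{\alpha}$ runs over all monomials of Sato weight $w$ of the form (a monomial in the curve constants $\lambda_j$) times (a monomial of degree at most three in the sixteen fundamental basis functions of (\ref{eq:35_2pole}), the $Q$-functions occurring at most quadratically or multiplied by a single $\wp$). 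Homogeneity of the Sato weight makes this set finite, and a weight bound of the kind used in Lemma~3.4 of \cite{MEeo10} restricts the range still further.

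Next I would argue that an expression of this shape exists. Rearranging the defining identity $\mathcal{T}_{ijklmn}=\wp_{ijk}\wp_{lmn}+\tfrac{1}{3}(\cdots)$ of the $\mathcal{T}$-functions writes $\wp_{ijk}\wp_{lmn}$ as $\mathcal{T}_{ijklmn}$ plus a sum of triple products of $2$-index $\wp$-functions and products of a $2$-index with a $4$-index $\wp$-function; replacing each $4$-index $\wp$-function, via the previous theorem, by a polynomial of degree at most two in the fundamental basis functions (in which the $Q$-functions appear only linearly) turns all of these into admissible monomials $M_{\alpha}$. It then remains to show that $\mathcal{T}_{ijklmn}$ itself, which lies in $\Gamma(3)$, reduces to an admissible cubic. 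I would handle this by the same pole-order induction used for Theorem~\ref{thm:35_3pole}: express $\mathcal{T}_{ijklmn}$ through the $\Gamma(3)$-basis and eliminate the non-admissible basis entries --- the $3$-index $\wp$-functions and the $\partial Q$-, $\wp^{[\cdots]}$-, $\mathcal{T}$- and $\mathcal{P}$-entries --- using the generic-class identities, the $4$-index relations, and the remaining members of the family of $210$ products, so that the whole collection closes on the admissible monomials. I expect this closure to be the main obstacle: the fact that the $Q$-functions are never forced to appear cubically and that no genuinely new function is required is precisely the content of the theorem, whereas the pole count alone only places both sides in the common space $\Gamma(6)$.

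With existence established, the coefficients $c_{\alpha}$ are determined by substituting the Taylor expansion of $\sigma(\bu)$ about the origin into both sides, expanding the $\wp$- and $Q$-functions via Definitions~\ref{def:nip} and~\ref{def:Qdef}, and equating coefficients of the monomials in $\bu$. This produces a linear system over $\mathbb{Q}$ with the $\lambda_j$ as formal parameters whose solution is the relation; the solution is unique once the $M_{\alpha}$ are checked, with the same expansion, to be linearly independent at weight $w$. As in the proof of Theorem~\ref{thm:35_3pole}, it suffices to use the $\sigma$-expansion of the cyclic curve~(\ref{eq:c35}), truncated at a weight sufficient to pin down the $c_{\alpha}$; the principal computational saving is to form only those products of series terms whose total weight is $w$, since the rest must cancel. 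Organising this uniformly across all $210$ products is the bulk of the labour, and the implementation and the complete list of relations are recorded in \cite{MEe09} and \cite{DBAFweb}.
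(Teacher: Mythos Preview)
Your computational core --- setting up a weight-homogeneous ansatz in monomials of degree at most three in the $\Gamma(2)$ basis and solving for the coefficients via the $\sigma$-expansion --- is exactly what the paper does. The paper also mentions two additional shortcuts you omit: allocating the cubic terms in advance to cancel the order-five and order-six poles (via Definition~\ref{def:nip}), and generating many of the relations by differentiating the bilinear relations of Section~\ref{SEC_DE_bi} or multiplying them by a $3$-index $\wp$-function, before resorting to the $\sigma$-expansion for the remainder.

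Where you diverge is in attempting an a~priori existence argument through the $\mathcal{T}$-function decomposition. This is more ambitious than the paper, which explicitly disclaims any such argument: it states that ``there is no simple linear algebra result to dictate that such relations exist'' and that one would follow from a basis for $\Gamma(6)$, but then simply searches. Your route is genuinely circular in the way you yourself flag: after writing $\wp_{ijk}\wp_{lmn}=\mathcal{T}_{ijklmn}+(\text{admissible})$ and expanding $\mathcal{T}_{ijklmn}$ in the even part of the $\Gamma(3)$ basis, you are left with the five basis $\mathcal{T}$-functions, each of which again contains a single quadratic $3$-index term, and the ``closure'' step is precisely the statement you are trying to prove. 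Reducing the problem to five distinguished products is a useful observation, but it does not by itself show that the residual $5\times 5$ linear system is consistent with purely admissible right-hand sides. The paper's proof is therefore no more rigorous than yours on existence --- both are ultimately verifications --- but it is more candid about this, and your attempted structural argument, as written, does not close the gap.
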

Some of these relations were presented in \cite{bego08} however the full set and the possible structure has not been determined until now.
\begin{proof}
Once again, these relations can be derived through a variety of methods as discussed in \cite{eeg10}.  They can again be found constructively using the $\sigma$-expansion, although in this case there is no simple linear algebra result to dictate that such relations exist, although such a result would surly follow from considering a basis for $\Gamma(6)$.  Instead we may simply search for them using arbitrary polynomials of the functions from the basis for $\Gamma(2)$.  The computations involved can be heavy and so simplifications are made by ensuring the polynomials are homogeneous in weight.  It is often possible to save computation by allocating the cubic terms to cancel higher order poles using Definition \ref{def:nip}.  Another method that minimises computation is to try and find the relations through manipulation of the bilinear relations discussed in the next section, (by differentiating them or multiplying them by a 3-index $\wp$-function).  However, it is not possible to find all the relations without using the $\sigma$-expansion.     \\
\end{proof}

In the hyperelliptic cases it is possible to write the set of quadratic relations concisely in the form a matrix equation, (see \cite{bel97} and \cite{MEeo10}).  Such equations can be seen to follow from the representation theoretic approach to Abelian functions developed by Athorne in \cite{CA2008}.  The corresponding results do not hold for the non-hyperelliptic cases but it is possible that similar results may be possible, perhaps using a family of matrix equations.  This is the subject of current research.

\subsection{Bilinear relations} \label{SEC_DE_bi}

The final set of differential equations that we consider here are a set bilinear in the 2 and 3-index $\wp$-functions.  Due to the parity properties of the $\wp$-functions we know that these \textit{bilinear relations} cannot contain any constant terms, or terms dependent only on the 2-index $\wp$-functions.  There is no analogue of these relations in the genus one case.  They have been considered in a variety of cases in \cite{bel97}, \cite{eemop07}, \cite{bego08}, \cite{MEe09}, \cite{MEhgt10} and \cite{MEeo10}.

The simplest way to construct these relations is through cross multiplication of the 4-index relations.  For example, substituting using equations (\ref{eq:35_p4444}) and (\ref{eq:35_p3444}) into
\[
\frac{\partial}{\partial u_3} \big( \wp_{4444} \big) - \frac{\partial}{\partial u_4} \big( \wp_{3444} \big) = 0
\]
gives the first relation  in Theorem \ref{thm:BL_35set} below.  This was the approach that was taken in the papers cited above.  However, the existence of $Q$-functions in the 4-index relations in the non-hyperelliptic cases means that more care has to be taken in the choice of cross products.  In higher genus trigonal cases, (or in the case where $n>3$) the inclusion of further Q-functions in the basis makes this method increasingly cumbersome.

An alternative method to systematically find bilinear relations has been developed making use of the $\mathcal{B}$-functions, one of the new generic classes of functions discussed in Section \ref{SEC_Bases}.  These functions are defined in equation (\ref{eq:BF}) as those linear combinations of $\{\wp_{ij}\wp_{lmn} \}$ which have poles of order at most three for a general curve.  We can search for bilinear relations as linear combinations of $\mathcal{B}$-functions along with the $\wp_{ijk}$-functions.  This approach requires more computation than taking cross products of 4-index relations.  However, as discussed in the proof below, it has the advantage of being systematic and allowing us to conclude that we have found a complete set of such relations.

\begin{theorem} \label{thm:BL_35set}
Every bilinear relation associated with the cyclic (3,5)-curve may be given as a linear combination of 65 relations starting with those below.  The full set is available online at \cite{DBAFweb}.
\begin{eqnarray*}
&&\bm{(-6)} \,\,\, \quad 0= \textstyle -\frac{1}{2}\wp_{2 4 4}
- \frac{1}{2}\wp_{3 3 3} - \wp_{3 4}\wp_{4 4 4}
+ \wp_{4 4}\wp_{3 4 4} \\
&&\bm{(-7)} \,\,\, \quad 0= \textstyle  2\wp_{2 3 4} - \lambda_{4}\wp_{3 4 4} - \wp_{3 3}\wp_{4 4 4}
- \wp_{3 4}\wp_{3 4 4} + 2\wp_{4 4}\wp_{3 3 4} \\
&&\bm{(-8)} \,\,\, \quad 0= \textstyle \frac{1}{3}\lambda_{4}\wp_{3 3 4} -\frac{2}{3}\wp_{2 3 3}
- \frac{2}{3}\wp_{3 3}\wp_{3 4 4} + \frac{1}{3}\wp_{3 4}\wp_{3 3 4}
+ \frac{1}{3}\wp_{4 4}\wp_{3 3 3}
\\ &&\quad \textstyle
- \wp_{2 4}\wp_{4 4 4} + \wp_{4 4}\wp_{2 4 4} \\
&&\bm{(-9)} \,\,\, \quad 0= \textstyle -2 \wp_{1 4 4} - \wp_{3 3}\wp_{3 3 4} + \wp_{3 4}\wp_{3 3 3}
+ \wp_{2 4}\wp_{3 4 4} - \wp_{3 4}\wp_{2 4 4} \\
&&\bm{(-9)} \,\,\, \quad 0= \textstyle 2\wp_{2 2 4}  -2\wp_{1 4 4} - \lambda_{4}\wp_{2 4 4}
- \wp_{2 3} \wp_{4 4 4} + \wp_{3 4}\wp_{2 4 4}
\\ &&\quad \textstyle
+ 2 \wp_{4 4}\wp_{2 3 4} - 2\wp_{2 4}\wp_{3 4 4} \\
&& \hspace*{0.80in} \vdots
\\
&&\bm{(-32)} \quad 0= \textstyle  \wp_{12}\wp_{111} - \wp_{11}\wp_{112}
+ \frac{1}{3}\lambda_{4}\lambda_{0}\big( 2\lambda_{4}\lambda_{0}\wp_{13}\wp_{334}
- 2\wp_{34}\wp_{133} \\
&&\quad \textstyle - 2\wp_{12}\wp_{444}
- 2\wp_{24}\wp_{144}
+ 4\wp_{44}\wp_{124}
\big) + \frac{1}{2}\lambda_{3}\lambda_{1}\big(
\wp_{13}\wp_{334}  - \wp_{34}\wp_{133}
\big) \\
&&\quad \textstyle + \lambda_{4}^2\lambda_{0}\big(
\frac{2}{3}\wp_{44}\wp_{224}
- \frac{1}{3}\wp_{22}\wp_{444}
- \frac{1}{3}\wp_{24}\wp_{244}
- \wp_{23}\wp_{334}
+ \wp_{34}\wp_{233}
\big) \\
&&\quad \textstyle + \frac{1}{9}\lambda_{4}^3\lambda_{0}\big(
2\lambda_{4}^3\lambda_{0}\wp_{34}\wp_{334}
+ 2\lambda_{4}^3\lambda_{0}\wp_{44}\wp_{333}
- 4\wp_{33}\wp_{344}
- \lambda_{4}^3\lambda_{0}\wp_{233}
\big) \\
&&\quad \textstyle + \frac{1}{3}\lambda_{0}\big(
4\wp_{14}\wp_{144}
- 4\wp_{44}\wp_{122}
- 4\wp_{24}\wp_{124}
+ 2\wp_{22}\wp_{144}
+ 5\wp_{23}\wp_{133} \\
&&\quad \textstyle + 5\wp_{12}\wp_{333}
- 8\wp_{44}\wp_{114}
- 10\wp_{33}\wp_{123}
+ 6\wp_{12}\wp_{244}
+ 4\wp_{11}\wp_{444}
\big) \\
&&\quad \textstyle + \lambda_{3}\lambda_{0}\big(
\wp_{22}\wp_{444}
+ \wp_{24}\wp_{244}
- 2\wp_{44}\wp_{224}
+ 3\wp_{23}\wp_{334}
- 3\wp_{34}\wp_{233}\\
&&\quad \textstyle + \frac{7}{3}\wp_{133}
\big)
+ \frac{1}{3}\lambda_{1}^2\big(
2\wp_{233}
- \wp_{34}\wp_{334}
+ 2\wp_{33}\wp_{344}
- \wp_{44}\wp_{333}
\big) \\
&&\quad \textstyle + \frac{1}{3}\lambda_{4}\lambda_{3}\lambda_{0}\big(
\wp_{233}
- 2\wp_{34}\wp_{334}
- 2\wp_{44}\wp_{333}
+ 4\wp_{33}\wp_{344}
\big) \\ &&\quad \textstyle  + \frac{1}{2}\lambda_{1}\big(
\wp_{12}\wp_{224} - \wp_{24}\wp_{122}
\big)
- \frac{4}{3}\lambda_{4}^2\lambda_{3}\lambda_{0}\wp_{334}
+ \frac{1}{6}\lambda_{4}\lambda_{1}^2\wp_{334} \\
&&\quad \textstyle + 2\lambda_{3}^2\lambda_{0}\wp_{334}
+ \frac{2}{9}\lambda_{4}^4\lambda_{0}\wp_{334}
+ \frac{25}{6}\lambda_{1}\lambda_{0}\wp_{334}
- 2\lambda_{2}\lambda_{0}\wp_{233}
- \frac{1}{2}\lambda_{2}\lambda_{1}\wp_{133}
\end{eqnarray*}
\end{theorem}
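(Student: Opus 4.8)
The plan is to reduce the determination of every bilinear relation to a finite computation inside the vector space $\Gamma(3)$, whose basis was fixed in Theorem \ref{thm:35_3pole}. By the parity discussion preceding the statement, a bilinear relation is a $\mathbb{C}[\lambda_0,\dots,\lambda_4]$-linear combination of the twenty $3$-index functions $\wp_{ijk}$ and of the products $\wp_{ij}\wp_{klm}$ that vanishes identically. The first step is to observe that in such a vanishing combination the quadratic part must already have poles of order at most three on its own: the $\wp_{ijk}$ contribute to the pole behaviour only up to order three, so the parts of pole order four and five come entirely from the $\{\wp_{ij}\wp_{klm}\}$ terms and must cancel among themselves. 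By the analysis of Section \ref{ss_classes}, the combinations of $\{\wp_{ij}\wp_{klm}\}$ whose poles of order four and five cancel and which involve no $5$-index $\wp$-function are precisely the $\mathbb{C}[\lambda]$-span of the $\mathcal{B}$-functions of equation (\ref{eq:BF}) --- the only other such combination there carries a $5$-index $\wp$ term coming from a derivative of a $4$-index $Q$-function. Hence every bilinear relation can be rewritten as a $\mathbb{C}[\lambda]$-linear combination of the $\mathcal{B}_{ijklm}$ and the $\wp_{ijk}$, all of which lie in $\Gamma(3)$.

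The second step is to compute all $\mathbb{C}[\lambda]$-linear relations among this finite generating set of elements of $\Gamma(3)$. Since every such relation is weight-homogeneous, I would work weight by weight: at weight $w$ list all products $\lambda^{\bm{a}}\mathcal{B}_{ijklm}$ and $\lambda^{\bm{a}}\wp_{ijk}$ of total weight $w$, form a generic linear combination, substitute the $\sigma$-expansion for the cyclic $(3,5)$-curve (Lemma \ref{lem:sigexp} and \cite{bg06,MEe09}), and impose that the resulting series vanish. This produces a homogeneous linear system over $\mathbb{C}$ whose solution space is the space of bilinear relations of weight $w$; equivalently one may first express each $\mathcal{B}_{ijklm}$ and each $\wp_{ijk}$ in the basis of Theorem \ref{thm:35_3pole} and read the relations off as syzygies. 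Either way the efficient series-multiplication routines of \cite{DBAFweb} make the computation tractable. Collecting a spanning set of solutions across all weights yields the stated $65$ relations, with the remainder recorded online; as a consistency check, the first displayed relation is exactly the one obtained in the text by cross-multiplying the $4$-index relations (\ref{eq:35_p4444}) and (\ref{eq:35_p3444}).

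The delicate point --- and the main obstacle --- is showing that only a bounded range of weights contributes genuinely new relations, so that the list of $65$ is complete rather than merely a sample. The generators are finite in number and of bounded weight: there are twenty $3$-index functions $\wp_{ijk}$, of weight at least $-21$, and finitely many non-vanishing $\mathcal{B}_{ijklm}$, of weight at least $-32$ (for instance $\mathcal{B}_{11111}=0$, while $\mathcal{B}_{21111}=\wp_{12}\wp_{111}-\wp_{11}\wp_{112}$ has weight $-32$ and reappears as the leading part of the last relation displayed). Since $\mathbb{C}[\lambda]$ is Noetherian the module of relations among these generators is finitely generated; the heuristic that its generators cannot have weight much below that of the generating functions themselves, together with an explicit check that no new independent relation appears for several weight levels below $-32$, is what lets one assert that the $65$ relations and their $\mathbb{C}[\lambda]$-multiples exhaust every bilinear relation. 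Turning that stabilisation check into an a priori bound on the weights of the module generators --- the analogue, for non-hyperelliptic curves, of the structural arguments available in the hyperelliptic setting --- is the step that would require the most care.
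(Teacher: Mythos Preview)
Your reduction of the problem to relations among the $\mathcal{B}$-functions and the $\wp_{ijk}$, and your weight-by-weight linear algebra via the $\sigma$-expansion, are essentially the paper's method and would certainly produce the displayed relations. The difference lies in the completeness step, which you correctly flag as the main obstacle and handle only heuristically. The paper avoids any Noetherian or stabilisation argument by a simple reorganisation: it first rebuilds the basis for $\Gamma(3)$ in an \emph{alternative} form in which the nineteen $Q$-derivatives of Theorem~\ref{thm:35_3pole} are replaced by nineteen of the $\mathcal{B}$-functions (the list is given explicitly in the proof). Completeness is then immediate. Every distinct $\mathcal{B}$-function not among these nineteen can be expressed in that basis, and that expression is itself a bilinear relation; conversely, any bilinear relation --- being a linear combination of $\mathcal{B}$'s and $\wp_{ijk}$'s --- can be reduced to a vanishing combination of basis elements by eliminating the non-basis $\mathcal{B}$'s via these fundamental relations, and so is a $\mathbb{C}[\lambda]$-combination of them. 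The count of $65$ is therefore just the number of distinct $\mathcal{B}$-functions minus~$19$, with no a~priori weight bound needed. Your ``equivalently'' remark about reading off syzygies in the basis of Theorem~\ref{thm:35_3pole} is almost this idea, but with that basis the expansion of a $\mathcal{B}$ involves $Q$-derivatives (hence $5$-index $\wp$'s), so individual expressions are not bilinear relations; switching to the $\mathcal{B}$-form of the basis is what makes the bookkeeping close.

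There is also a point you pass over. Your assertion that the only combinations of $\{\wp_{ij}\wp_{klm}\}$ lying in $\Gamma(3)$ are the generic ones from Section~\ref{ss_classes} requires a curve-specific verification: in hyperelliptic cases, functions such as $\Delta$ in equation~(\ref{eq:DF}) give extra cancellations, and their derivatives are $\{\wp_{ij}\wp_{klm}\}$-combinations in $\Gamma(3)$ outside the $\mathcal{B}$-span, leading to additional bilinear relations. The paper checks explicitly that no such phenomenon occurs for the $(3,5)$-curve, using only the Schur--Weierstrass part~(\ref{eq:35SW}) of the $\sigma$-expansion; this check should be part of your argument before you conclude that every bilinear relation lies in the $\mathcal{B}$-span.
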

\noindent A number of these relations were originally derived in \cite{bego08} but this complete set was not available until now.
\begin{proof}
These relations were derived alongside an alternative basis for $\Gamma(3)$, constructed as follows:  We start by including the 3-index $\wp$-functions in the basis.  We then consider decreasing weights in turn and look to see how many $\mathcal{B}$-functions at that weight may be included in the basis.  Note that unlike the $\wp$-functions the order of the indices in the $\mathcal{B}$-function is relevant.  The definition of the $\mathcal{B}$-functions can simplify for specific indices and so we first test to see which $\mathcal{B}$-functions at a given weight are distinct.  We then test their linear dependence with themselves and with the $\wp_{ijk}$.  Note that when doing this we need not consider the entries in $\Gamma(2)$.  Although they are included in the basis they are all even functions and so cannot be linearly dependent on the $\mathcal{B}$-functions which are all odd.  We test the linear independence using the $\sigma$-expansion as discussed in the proof of Theorem \ref{thm:35_3pole}.  It is important to make use of weight simplifications when expanding products of series to reduce computation.

We find that the following 19 functions may be added to the basis, at the weights indicated.  Note that they play the role of the 19 $Q$-derivatives in the alternative basis in Theorem \ref{thm:35_3pole}.
\[
\begin{array}{lclcl}
\, \, \, \bm{(-8)}  \quad \mathcal{B}_{33344}      & \qquad & \bm{(-16)} \quad \mathcal{B}_{12334} \\
\, \, \, \bm{(-9)}  \quad \mathcal{B}_{23444}      & \qquad & \bm{(-17)} \quad \mathcal{B}_{12333}, \, \mathcal{B}_{12244}, \, \mathcal{B}_{11444} \\
\bm{(-11)} \quad \mathcal{B}_{23334}, \, \mathcal{B}_{22444}& \qquad & \bm{(-18)} \quad B_{11344} \\
\bm{(-12)} \quad \mathcal{B}_{13444}               & \qquad &  \bm{(-19)} \quad \mathcal{B}_{12233} \\
\bm{(-13)} \quad \mathcal{B}_{13344}               & \qquad &  \bm{(-20)} \quad \mathcal{B}_{12224},\,  \mathcal{B}_{11244} \\
\bm{(-14)} \quad \mathcal{B}_{22244}, \, \mathcal{B}_{13334}, \, \mathcal{B}_{12444} & \qquad & \bm{(-23)} \quad \mathcal{B}_{11144}  \\
\bm{(-15)} \quad \mathcal{B}_{12344} & &
\end{array}
\]
Now, any bilinear relation between the 2 and 3-index $\wp$-functions must have the order 4 and 5 poles in the $\wp_{ij}\wp_{klm}$-terms canceling.  These terms must be hence composed of $\mathcal{B}$-functions:  The only alternative would be if there were combinations with reduced orders of poles specific to the curve.  It is simple to check that no such combinations exist in the (3,5)-case as we can check using just the leading order terms of the $\sigma$-expansion, given in (\ref{eq:35SW}).  So all bilinear relations are given by linear combinations of $\mathcal{B}$-functions and $\wp_{ijk}$.  There will be one such relation for each of the distinct $\mathcal{B}$-functions not in the basis, and every other bilinear relation will be a combination of these.  The actual relations were derived as a by-product of the derivation of the alternative basis to $\Gamma(3)$. \\
\end{proof}

In general, this approach allows us to derive all independent bilinear relations and conclude we have the full set.  However, care must be taken in drawing the conclusion that all bilinear relations have pole cancelation in the form of $\mathcal{B}$-functions.  As discussed in the proof, we need to check that their are no cancelations specific to the curve.  In the (2,5)-case for example, there exists the function $\Delta$ in equation (\ref{eq:DF}) which has poles of order 3 in general but poles of order 2 on the (2,5)-curve.  Hence the derivatives of $\Delta$ are sums of $\wp_{ij}\wp_{klm}$ which are not $\mathcal{B}$-functions and yet have poles of order three.  This leads to \textit{extra} bilinear relations.  The existence of functions like $\Delta$ and the following extra bilinear relations appears to be a feature unique to the hyperelliptic cases.

We note that there are no bilinear relations at weights $-5$ or $-35$ in association with $\mathcal{B}_{44444}$ and $\mathcal{B}_{11111}$ as may be expected.  This is because the definition of the $\mathcal{B}$-functions collapses to zero in the case where all the indices are the same.  This also explains why there can be no bilinear relations in the elliptic (genus one) case.

\section{Addition Formulae} \label{SEC_AF}

Here we discuss the addition formulae satisfied by the Abelian functions and present a new formula associated with the cyclic trigonal curve of genus four.  We start by considering the formulae which generalise equation (\ref{eq:Intro_elliptic_add}) from the elliptic case.  Such a formula will express the ratio of functions
\[
\frac{\sigma(\bu + \bv)\sigma(\bu - \bv)}{\sigma(\bu)^2\sigma(\bv)^2}
\]
as a sum of products of pairs of functions from the basis for $\Gamma(2)$, one a function of $\bu$ and the other of $\bv$.  This must be so since the ratio above can be shown to be Abelian in both $\bu$ and $\bv$ and it has the correct pole structures, (see \cite{MEeo10} for details).  We can go further and use the parity property of $\sigma(\bu)$ to check that the function is either symmetric or anti-symmetric with respect to the change of variables $(\bu,\bv) \mapsto (\bv,\bu)$ when the $\sigma$-function is odd or even respectively.

The coefficients in the polynomial can be explicitly determined using the $\sigma$-expansion.  (See \cite{MEe09} for details of such calculations).  Such formulae have been found for a variety of cases in \cite{ba07}, \cite{bel97}, \cite{bego08}, \cite{MEe09}, \cite{MEhgt10}.  In particular, the formula for the cyclic (3,5)-curve was derived in Section 8 of \cite{bego08} and is given by
\[
\frac{\sigma(\bu + \bv)\sigma(\bu - \bv)}{\sigma(\bu)^2\sigma(\bv)^2} = f(\bu,\bv) + f(\bv,\bu)
\]
where
\begin{eqnarray*}
&&f(\bu,\bv) = \textstyle \wp_{44}(\bu)\wp_{11}(\bv) + \wp_{22}(\bu)\wp_{14}(\bv) - \wp_{12}(\bu)\wp_{24}(\bv) \\
&&\quad \textstyle - 2\wp_{14}(\bu)\wp_{14}(\bv)  - \frac{1}{2}Q_{1224}(\bu) + Q_{1244}(\bv)\wp_{34}(\bu)   \\
&&\quad \textstyle + \frac{1}{6}Q_{2233}(\bu)\wp_{33}(\bv)
- \frac{1}{3}Q_{2444}(\bv)\wp_{13}(\bu)
- \frac{1}{6}Q_{2333}(\bu)\wp_{23}(\bv) \\
&&\quad \textstyle + \frac{1}{3}Q_{2333}(\bv)\wp_{34}(\bu)\lambda_{4}
- \frac{2}{3}\wp_{13}(\bu)\wp_{33}(\bv)\lambda_{4}
- \frac{2}{3}\wp_{33}(\bu)\wp_{34}(\bv)\lambda_{2} \\
&&\quad \textstyle - \frac{1}{4}Q_{2333}(\bu)\lambda_{3}   - \wp_{33}(\bu)\wp_{34}(\bv)\lambda_{4}\lambda_{3}
+ \Big( \frac{5}{3}\lambda_{1} - \frac{1}{3}\lambda_{4}\lambda_{2} + \frac{3}{4}\lambda_{3}^2 \Big)\wp_{33}(\bu).
\end{eqnarray*}
Note that the (3,5) $\sigma$-function is even and hence the addition formula here is symmetric in $(\bu,\bv)$.  (The reason for the difference between this formula and the one is \cite{bego08} is that some formulae in \cite{bego08} were calculated using different $Q$-functions in the basis.  Our paper makes choices consistent with the basis derived in Section 5 of \cite{bego08}.)

\vspace{3mm}

In cyclic non-hyperelliptic cases there are more addition formulae associated with the functions, resulting from automorphisms of the curve equation.  Such addition formulae were the topic of \cite{emo11} which gave a thorough treatment of the genus one and two cases.  We will present a new genus four addition formulae associated with the cyclic (3,5)-curve.  It is related to the automorphism of the curve (\ref{eq:c35}) given by the operator
\[
[\zeta]: (x,y) \mapsto (x, \zeta y), \quad \mbox{where } \zeta=\exp\left( \frac{2\pi i}{3} \right).
\]
So $\zeta$ is a cube root of unity and $[\zeta]$ an operator which multiplies $y$ by the root leaving the curve unchanged.  We extend this notation to define the sequence of operators and automorphisms,
\[
[\zeta^j]: (x,y) \mapsto (x, \zeta^j y), \quad \mbox{for } j \in \mathbb{Z}.
\]
We can check using the basis of differentials (\ref{eq:du35}) that these operators act on the variables $\bu$ as follows.
\begin{equation} \label{eq:zetau}
[\zeta^j]\bu = ( \zeta^j u_1, \zeta^j u_2, \zeta^{2j} u_3, \zeta^j u_4 ).
\end{equation}
The action of such operators on the lattice $\Lambda$ is stable, (it moves the points around but does not change the overall lattice).  This can be checked by considering the effect on the individual elements of the period matrices, (see \cite{on98} for more details).  We can now derive the following result for the $\sigma$-function which follows Lemma 4.2.5 in \cite{on98}.

\begin{lemma} \label{lem:sigzeta}
The $\sigma$-function associated to the cyclic (3,5)-curve satisfies
\begin{equation} \label{sigma_zeta}
\sigma( [\zeta^j]\bu ) = \zeta^{2j}\sigma(\bu).
\end{equation}
\end{lemma}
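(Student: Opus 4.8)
The plan is to read the result off the Taylor expansion of $\sigma(\bu)$ supplied by Lemma \ref{lem:sigexp}, exploiting the Sato weight structure together with the explicit action (\ref{eq:zetau}) of $[\zeta^j]$ on the variables; this is the concrete analogue for the $(3,5)$-case of Onishi's Lemma 4.2.5 in \cite{on98}.

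First I would recall that, with the normalisation $K=1$ fixed after Lemma \ref{lem:sigexp}, the entire function $\sigma(\bu)$ is a sum of terms, each a monomial $u_1^{\alpha_1}u_2^{\alpha_2}u_3^{\alpha_3}u_4^{\alpha_4}$ times a monomial in the curve parameters $\lambda_0,\dots,\lambda_4$, with every term carrying the same Sato weight $\mathrm{wt}(\sigma)=8$. Since the variables have weights $(7,4,2,1)$, such a $\bu$-monomial has weight $w=7\alpha_1+4\alpha_2+2\alpha_3+\alpha_4$ and the accompanying $\bm{\lambda}$-monomial has weight $8-w$.

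The key step is a congruence modulo $3$. Each parameter of the cyclic curve has $\mathrm{wt}(\lambda_j)=-3(5-j)$, divisible by $3$, so in every term the $\bm{\lambda}$-part has weight $\equiv 0\pmod 3$ and hence $w\equiv 8\equiv 2\pmod 3$. On the other hand, (\ref{eq:zetau}) multiplies $u_1,u_2,u_4$ by $\zeta^j$ and $u_3$ by $\zeta^{2j}$, so it sends $u_1^{\alpha_1}u_2^{\alpha_2}u_3^{\alpha_3}u_4^{\alpha_4}$ to $\zeta^{j(\alpha_1+\alpha_2+2\alpha_3+\alpha_4)}$ times itself, and $\alpha_1+\alpha_2+2\alpha_3+\alpha_4\equiv w\pmod 3$ because $7\equiv 4\equiv 1$ and $2\equiv 2\pmod 3$. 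As $[\zeta^j]$ leaves the curve parameters untouched, it multiplies every term of the expansion by $\zeta^{jw}=\zeta^{2j}$, and summing over all terms yields $\sigma([\zeta^j]\bu)=\zeta^{2j}\sigma(\bu)$.

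I do not expect a real obstacle; the points deserving a line of justification are that the termwise argument is legitimate (the expansion of Lemma \ref{lem:sigexp} converges throughout $\mathbb{C}^g$ and $[\zeta^j]$ acts linearly), that the exponents in (\ref{eq:zetau}) are indeed the residues mod $3$ of the variable weights, and a sanity check on the leading Schur--Weierstrass part $SW_{3,5}$ in (\ref{eq:35SW}), each of whose monomials has weight $8\equiv 2\pmod 3$. If one wished to avoid the expansion, an alternative route is to use the quasi-periodicity (\ref{eq:HG_quas}) together with the stability of $\Lambda$ under $[\zeta^j]$ to argue that $\sigma([\zeta^j]\bu)/\sigma(\bu)$ is a nowhere-vanishing entire function sharing the quasi-periods of $\sigma$, hence constant, the weight computation above being exactly what then fixes that constant as $\zeta^{2j}$.
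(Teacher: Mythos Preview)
Your argument is correct and takes a genuinely different route from the paper's. The paper argues that the ratio $\sigma([\zeta^j]\bu)/\sigma(\bu)$ is entire and bounded: it uses the stability of $\Lambda$ under $[\zeta^j]$ together with the identity $L([\zeta]\bu,\bv)=L(\bu,[\zeta^{-1}]\bv)$ from \cite{on98} to show that the exponential factors in the quasi-periodicity relation (\ref{eq:HG_quas}) cancel, leaving only a $\pm1$ from the $\chi$'s; Liouville then forces the ratio to be constant, and the constant is read off from the Schur--Weierstrass leading term (\ref{eq:35SW}). You instead work termwise with the full expansion of Lemma~\ref{lem:sigexp}, observing that for the cyclic curve every $\lambda_j$ has weight divisible by $3$, so the $\bu$-monomial in each term has weight $\equiv 2\pmod 3$, and that the action (\ref{eq:zetau}) on a monomial is exactly $\zeta^{j\cdot(\text{weight mod }3)}$.

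Each approach has its merits. Yours is more elementary and entirely self-contained within the paper: it needs nothing from \cite{on98}, no analysis of how $[\zeta^j]$ interacts with the period matrices or with $L$ and $\chi$, and it makes transparent \emph{why} the cyclic restriction matters (the $\mu_j$ of the general curve have weights not divisible by $3$). The paper's Liouville argument is more structural and would transplant more readily to settings where one has quasi-periodicity but not a weight-homogeneous expansion; it also only needs the leading term $SW_{3,5}$ to identify the constant, rather than the full structure of the $C_k$. You in fact sketch the paper's route yourself in your final paragraph, so you have both proofs in hand.
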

\begin{proof}
Consider the quasi-periodicity of $\sigma( [\zeta^j]\bu )$.  If $\ell$ is a point on the lattice then
\[
\sigma( [\zeta^j](\bu+\ell) ) = \sigma( [\zeta^j]\bu + [\zeta^j]\ell ).
\]
Since the lattice is stable under the action we know that $[\zeta^j]\ell$ is also on the lattice.  Hence by (\ref{eq:HG_quas})
\[
\sigma( [\zeta^j](\bu+\ell) )
= \chi([\zeta^j]\ell) \sigma([\zeta^j]\bu) \exp \Big[ L \Big( [\zeta^j] \Big(\bu+\frac{\ell}{2}\Big), [\zeta]\ell \Big) \Big]
\]
In \cite{on98} the author shows that for an automorphism of a cyclic curve we have
\[
L([\zeta]\bu,\bv) = L(\bu,[\zeta^{-1}]\bv)
\]
and hence
\[
L([\zeta^j]\bu,[\zeta^j]\bv) = L(\bu,[\zeta^{-j}][\zeta^j]\bv) = L(\bu,\bv).
\]
Therefore, we have
\[
\sigma( [\zeta^j](\bu+\ell) )
= \chi([\zeta^j]\ell) \sigma([\zeta^j]\bu) \exp \Big[ L \Big( \bu+\frac{\ell}{2}, \ell \Big) \Big]
\]
We now consider the quotient
\[
\frac{\sigma( [\zeta^j](\bu+\ell) )}{\sigma( (\bu+\ell) )}
= \frac{\chi([\zeta^j]\ell)}{\chi(\ell)} \frac{\sigma([\zeta^j]\bu)}{\sigma(\bu)}
= \pm \frac{\sigma([\zeta^j]\bu)}{\sigma(\bu)}
\]
since $\chi(\ell) = \pm 1$. So we see that the function
\[
\frac{\sigma([\zeta^j]\bu)}{\sigma(\bu)}
\]
is bounded and entire (since the zero sets coincide).  Hence, by Liouville's theorem, the function is a constant.  Using the Schur-Weierstrass polynomial (\ref{eq:35SW}) we see that this constant is $\zeta^{2j}$.  \\
\end{proof}

We can now derive the addition formula associated with these automorphisms.

\begin{theorem} \label{thm:35_3t2v}
The functions associated to the cyclic (3,5)-curve satisfy
\[
\frac{\sigma(\bu+\bv)\sigma(\bu+[\zeta]\bv)\sigma(\bu+[\zeta^2]\bv) }
{ \sigma(\bu)^3 \sigma(\bv)^3 } = f(\bu,\bv) + f(\bv,\bu)
\]
where
\[
f(\bu,\bv) = \Big[ P_{24} + P_{21} + P_{18} + P_{15} + P_{12} + P_{9} + P_{6} + P_{3} + P_{0} \Big](\bu,\bv)
\]
and the polynomials $P_k(\bu,\bv)$ are as presented in Appendix \ref{APP_35add}.
\end{theorem}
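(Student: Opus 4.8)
The plan is to recognise the left-hand side, written $F(\bu,\bv)$, as a bilinear expression over the basis of Theorem~\ref{thm:35_3pole} in the two sets of variables, and then to pin down its coefficients with the $\sigma$-expansion. First I would record two symmetries of $F$. Applying Lemma~\ref{lem:sigzeta} to each factor, $\sigma(\bu+[\zeta^j]\bv)=\sigma\big([\zeta^j]([\zeta^{-j}]\bu+\bv)\big)=\zeta^{2j}\sigma([\zeta^{-j}]\bu+\bv)$, and the product over $j=0,1,2$ accumulates the factor $\zeta^{0+2+4}=1$, so the numerator of $F(\bu,\bv)$ equals $\prod_{k=0}^{2}\sigma([\zeta^{k}]\bu+\bv)$, which is the numerator of $F(\bv,\bu)$; as $\sigma(\bu)^3\sigma(\bv)^3$ is already symmetric, $F(\bu,\bv)=F(\bv,\bu)$. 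The same manipulation, now factoring $[\zeta]$ out of each argument by linearity, $[\zeta]\bu+[\zeta^j]\bv=[\zeta](\bu+[\zeta^{j-1}]\bv)$, and using $\sigma([\zeta]\bw)=\zeta^{2}\sigma(\bw)$ together with $\sigma([\zeta]\bv)^3=\zeta^{6}\sigma(\bv)^3=\sigma(\bv)^3$, shows $F([\zeta^j]\bu,\bv)=F(\bu,\bv)$ and hence $F(\bu,[\zeta^j]\bv)=F(\bu,\bv)$.

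Next I would verify that $F(\cdot,\bv)$ is an Abelian function of $\bu$ with poles of order at most three along $\kappa^{-1}(\Theta^{[3]})$. For $\bm{\ell}\in\Lambda$, apply the quasi-periodicity~(\ref{eq:HG_quas}) under $\bu\mapsto\bu+\bm{\ell}$ to each of $\sigma(\bu+\bv),\sigma(\bu+[\zeta]\bv),\sigma(\bu+[\zeta^2]\bv)$ and to $\sigma(\bu)^3$: the characters contribute $\chi(\bm{\ell})^3/\chi(\bm{\ell})^3=1$, while the quadratic exponents contribute
\[
\sum_{j=0}^{2}L\!\Big(\bu+[\zeta^{j}]\bv+\tfrac{\bm{\ell}}{2},\bm{\ell}\Big)-3L\!\Big(\bu+\tfrac{\bm{\ell}}{2},\bm{\ell}\Big)=\sum_{j=0}^{2}L\big([\zeta^{j}]\bv,\bm{\ell}\big)=L\big(\bv+[\zeta]\bv+[\zeta^2]\bv,\bm{\ell}\big),
\]
which vanishes because $L$ is linear in its first slot and $\bv+[\zeta]\bv+[\zeta^2]\bv=\bm{0}$ by (\ref{eq:zetau}) and $1+\zeta+\zeta^2=0$. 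Hence $F(\bu+\bm{\ell},\bv)=F(\bu,\bv)$, and periodicity in $\bv$ follows from the symmetry above. Since the numerator is entire in $\bu$ and, for generic $\bv$, does not vanish identically on the zero locus of $\sigma(\bu)$, the only poles of $F(\cdot,\bv)$ come from $\sigma(\bu)^{3}$, are of order at most three, and lie exactly on $\kappa^{-1}(\Theta^{[3]})$ by the zero-set property of $\sigma$. Thus $F(\cdot,\bv),F(\bu,\cdot)\in\Gamma(3)$.

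Theorem~\ref{thm:35_3pole} now lets us write $F(\bu,\bv)=\sum_{a,b}c_{ab}(\bm{\lambda})\,g_{a}(\bu)g_{b}(\bv)$, the sum over the $81$ basis functions, each $c_{ab}$ a polynomial in the curve parameters making every term homogeneous of the total weight of $F$, namely $-3\,\mathrm{wt}(\sigma)=-24$. Two reductions shrink this ansatz. Under $[\zeta]$ a weight-homogeneous function of $\bu$ is multiplied by $\zeta^{w\bmod 3}$ (this matches the $\zeta$-exponents in (\ref{eq:zetau}) since $7,4,2,1$ reduce mod $3$ to $1,1,2,1$), so it is $[\zeta]$-invariant precisely when $3\mid w$; as the $c_{ab}$ are themselves $[\zeta]$-invariant (the curve (\ref{eq:c35}) is fixed by $[\zeta]$), the $[\zeta]$-invariance of $F$ in each variable forces every $g_{a},g_{b}$ that occurs to have weight divisible by $3$. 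Second, the symmetry $F(\bu,\bv)=F(\bv,\bu)$ makes $(c_{ab})$ a symmetric array, so the expansion regroups as $f(\bu,\bv)+f(\bv,\bu)$; collecting the terms by the weight of one of the two factors gives the pieces $P_{0},P_{3},\dots,P_{24}$ of the statement.

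Finally the coefficients are determined by substituting the Taylor series of Lemma~\ref{lem:sigexp} — and the expansions it induces for all the $\wp$- and $Q$-functions — into both sides and matching the coefficients of the monomials in $\bu$ and $\bv$, proceeding weight block by weight block, each block being a finite linear system over $\mathbb{Q}[\bm{\lambda}]$; consistency of every block confirms the identity and yields the $f$ tabulated in Appendix~\ref{APP_35add}. I expect the genuine difficulty to be computational rather than conceptual: even after the $[\zeta]$-reduction the bilinear ansatz is large, the two-variable $\sigma$-series must be carried far enough that the lowest-weight blocks are fully constrained, and the $P_{k}$ are lengthy; as in \cite{MEe09}, the computation is only feasible if at each stage one forms solely those products of series terms that fall at the weight currently being matched.
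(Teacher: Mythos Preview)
Your argument is correct and follows the paper's proof closely: establish the $(\bu,\bv)$-symmetry via Lemma~\ref{lem:sigzeta}, verify $\Lambda$-periodicity from the vanishing of $\bv+[\zeta]\bv+[\zeta^2]\bv$, expand bilinearly over the $\Gamma(3)$-basis of Theorem~\ref{thm:35_3pole}, and determine the coefficients from the $\sigma$-expansion under the weight constraint. Your $[\zeta]$-invariance reduction (forcing each basis factor appearing to have weight divisible by~$3$) is a genuine refinement the paper does not record explicitly; conversely the paper also invokes evenness under $(\bu,\bv)\mapsto(-\bu,-\bv)$ to cut the ansatz, which you omit but could add.
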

\begin{proof}
Denote the left hand side of the formula by LHS$(\bu,\bv)$.  Using Lemma \ref{lem:sigzeta} and the parity property of the $\sigma$-function we first check that LHS$(\bu,\bv)$ is symmetric under $(\bu,\bv) \mapsto (\bv,\bu)$.  Next consider the affect of $\bu \mapsto \bu + \ell$.
\begin{eqnarray*}
&&\mbox{LHS}(\bu+\ell,\bv)
= \frac{\sigma(\bu+\ell+\bv)\sigma(\bu+\ell+[\zeta]\bv)\sigma(\bu+\ell+[\zeta^2]\bv)}
{\sigma(\bu+\ell)^3\sigma(\bv)^3} \\
&&\quad = \mbox{LHS}(\bu,\bv)\frac{\chi(\ell)e^{\Psi[\bu+\bv+\ell/2]}
\chi(\ell)e^{\Psi[\bu+[\zeta]\bv+\ell/2]}
\chi(\ell)e^{\Psi[\bu+[\zeta^2]\bv+\ell/2]}}
{\chi(\ell)^3e^{3\Psi[\bu+\ell/2]}} \\
&&\quad = \mbox{LHS}(\bu,\bv) e^{\Psi\bv[1 + [\zeta] + [\zeta^2]]}.
\end{eqnarray*}
However, from (\ref{eq:zetau}) we see that
\[
\bv(1 + [\zeta] + [\zeta^2]) =
\left( \begin{array}{l}
v_1 (1 + \zeta + \zeta^2 ) \\
v_2 (1 + \zeta + \zeta^2 ) \\
v_3 (1 + \zeta^2 + \zeta )
\end{array} \right) =
\left( \begin{array}{l}
0 \\ 0 \\ 0
\end{array} \right),
\]
and so LHS$(\bu,\bv)$ is Abelian with respect to $\bu$.  Further, since it is symmetric in $(\bu,\bv)$ we can conclude that it is Abelian in $\bv$ also.  It is now clear that LHS$(\bu,\bv)$ may be expressed as
\[
\mbox{LHS}(\bu,\bv) = \sum_i c_i A_i(\bu)B_i(\bv)
\]
where the $c_i$ are constants and the functions $A_i(\bu),B_i(\bv)$ belong to the basis for
$\Gamma(3)$, presented earlier in Theorem \ref{thm:35_3pole}.

To determine the constants $c_i$ we use the $\sigma$-expansion.  The computations involved can be heavy and so it is essential that we take into account all the available simplifications.  We have already noted that $\mbox{LHS}(\bu,\bv)$ is symmetric in $(\bu,\bv)$, but we can also check that it is even with respect to $(\bu,\bv) \mapsto (-\bu,-\bv)$.  Hence we need only consider combinations of functions that preserve these two properties.  Further, we can check that it has total weight $-24$ and so only consider those combinations with this weight (including when combined with monomials in the curve constants).  Together these simplifications drastically reduces the number of possible terms.  To further ease the time and memory constraints we implement code in Maple to efficiently expand the products of series so that only the relevant terms are considered.

We find that $\mbox{LHS}(\bu,\bv)$ is given as stated.  For simplicity we group together the terms with common weight ratios into the polynomials $P_k(\bu,\bv)$ which contain the terms with weight $-k$ in the Abelian functions and weight $-(24+k)$ in the curve parameters.  The polynomials are presented in Appendix \ref{APP_35add}. \\
\end{proof}

\noindent This formula is actually a reduction of a more general formula in three sets of variables expressing
\[
\frac{\sigma(\bu+\bv+\bw)\sigma(\bu+[\zeta]\bv+[\zeta^2]\bw)\sigma(\bu+[\zeta^2]\bv+[\zeta]\bw) }
{ \sigma(\bu)^3 \sigma(\bv)^3\sigma(\bw)^3 }
\]
as a sum of products of triplets of functions from $\Gamma(3)$, one each in $\bu,\bv$ and $\bw$.  The existence of these formulae were predicted in \cite{bego08} but we can only now derive them since they require the basis for $\Gamma(3)$, calculated in Section \ref{SEC_Bases} using the new classes of Abelian functions that are the subject of this paper.

\vspace{3mm}

We finish by noting that there are further addition formula associated with further reductions of the (3,5)-curve.  The curve $y^3 = x^5 + \lambda_0$ will have a family of automorphisms given by
\[
[i^j]: (x,y) \mapsto (\eta^jx,y)
\]
where $\eta$ is a fifth root of unity.  We can check that the ratio
\[
\frac{\sigma(\bu+\bv)\sigma(\bu+[\eta]\bv)\sigma(\bu+[\eta^2]\bv)\sigma(\bu+[\eta^3]\bv)\sigma(\bu+[\eta^4]\bv) }{ \sigma(\bu)^5 \sigma(\bv)^5 }
\]
is Abelian in both $\bu$ and $\bv$ and so a similar approach could be employed to derive the formula, following the derivation of a basis for $\Gamma(5)$.  However both these computations would be computationally intensive.

\vspace{3mm}

\appendix

\section{Generic Abelian functions with poles of order four} \label{APP_Pole4}

In this Appendix we repeat the procedure described in Section \ref{ss_classes} to calculate Abelian function which have poles of order four for any $(n,s)$-curve.

The first non trivial case occurs when we allow five indices.  We may construct polynomials from 5-index $\wp$-functions and the products of a 2 and 3-index $\wp$-function.  There is only one independent non-zero combination which has poles of order four, given by
\[
\wp_{ijklm} + \wp_{ij}\wp_{klm} - 13\wp_{lm}\wp_{ijk}.
\]
(We note that the coefficient 13 may seem unexpected but it can be easily verified that it is necessary to balance the poles of order 5 in the other two terms.)

With six indices we can build polynomials from the set of functions
\[
\{ \wp_{ijklmn}, \wp_{ijk}\wp_{lmn}, \wp_{ij}\wp_{klmn}, \wp_{ij}\wp_{kl}\wp_{mn} \}.
\]
There are three independent non-zero combinations which have poles of order 4.  These are given by
\begin{eqnarray*}
&&\wp_{ijklmn} \textstyle + \wp_{ij}\wp_{klmn} - \frac{1}{2}\big(
3\big(\wp_{jk}\wp_{ilmn} + \wp_{jl}\wp_{ikmn} + \wp_{jm}\wp_{ikln} \\
&& \quad \textstyle + \wp_{jn}\wp_{iklm} \big) + 5\big(\wp_{kl}\wp_{ijmn} + \wp_{km}\wp_{ijln} + \wp_{kn}\wp_{ijlm} + \wp_{lm}\wp_{ijkn} \\
&& \quad \textstyle + \wp_{ln}\wp_{ijkm} + \wp_{mn}\wp_{ijkl} \big)
\big),
\\
&&\wp_{ij}\wp_{klmn} \textstyle - \frac{1}{4}\big(
3\wp_{ijk}\wp_{lmn} + 3\wp_{ijl}\wp_{kmn}  + 3\wp_{ijm}\wp_{kln} + 3\wp_{ijn}\wp_{klm}
\\ && \quad \textstyle - \wp_{ikl}\wp_{jmn} - \wp_{ikm}\wp_{jln} - \wp_{ikn}\wp_{jlm}
- \wp_{ilm}\wp_{jkn}  - \wp_{iln}\wp_{jkm}  - \wp_{imn}\wp_{jkl}
\big),
\\
&&\wp_{ij}\wp_{kl}\wp_{mn} \textstyle - \frac{1}{8} \big(
\wp_{ijk}\wp_{lmn} + \wp_{ijl}\wp_{kmn} + \wp_{ijm}\wp_{kln} + \wp_{ijn}\wp_{klm}+ \wp_{ikl}\wp_{jmn}
\\ && \quad \textstyle - \wp_{ikm}\wp_{jln} - \wp_{ikn}\wp_{jlm}
- \wp_{ilm}\wp_{jkn} - \wp_{iln}\wp_{jkm} + \wp_{imn}\wp_{jkl}
\big).
\end{eqnarray*}
Note that we take linear combinations of these to define the class,
\begin{eqnarray*}
\mathcal{F}_{ijklmn} &=&  \wp_{ij}\wp_{kl}\wp_{mn} - \wp_{ij}\wp_{kn}\wp_{lm} - \wp_{il}\wp_{jk}\wp_{mn} + \wp_{il}\wp_{jn}\wp_{km} \\
&& \qquad + \wp_{im}\wp_{jl}\wp_{kn}
- \wp_{im}\wp_{jn}\wp_{kl} + \wp_{in}\wp_{jk}\wp_{lm} - \wp_{in}\wp_{jl}\wp_{km},
\end{eqnarray*}
which were used to complete the basis for $\Gamma(4)$ in the (3,4)-case as detailed in \cite{MEeo10}.
This procedure may be continued indefinitely.  We have compiled functions with up to 8 indices and these can be found online at \cite{DBAFweb}.  One class of particular note is
\begin{eqnarray*}
&&\mathcal{G}_{ijklmnop} = \wp_{ijklmnop} - 4\big( \wp_{ijno}\wp_{klmp} + \wp_{ijnp}\wp_{klmo} + \wp_{ijlp}\wp_{kmno} \\
&&\quad + \wp_{ijmn}\wp_{klop} + \wp_{ijmo}\wp_{klnp} + \wp_{ijmp}\wp_{klno} + \wp_{ijkm}\wp_{lnop} + \wp_{ijkn}\wp_{lmop} \\
&&\quad + \wp_{ijko}\wp_{lmnp} + \wp_{ijkp}\wp_{lmno} + \wp_{ijlm}\wp_{knop} + \wp_{ijln}\wp_{kmop} + \wp_{ijlo}\wp_{kmnp} \\
&&\quad + \wp_{ijkl}\wp_{mnop} + \wp_{ijop}\wp_{klmn} + \wp_{iklm}\wp_{jnop} + \wp_{ikln}\wp_{jmop} + \wp_{iklo}\wp_{jmnp} \\
&&\quad + \wp_{iklp}\wp_{jmno} + \wp_{ikmn}\wp_{jlop} + \wp_{ikmo}\wp_{jlnp} + \wp_{ikmp}\wp_{jlno} + \wp_{ikno}\wp_{jlmp} \\
&&\quad + \wp_{iknp}\wp_{jlmo} + \wp_{ikop}\wp_{jlmn} + \wp_{ilmn}\wp_{jkop} + \wp_{ilmo}\wp_{jknp} + \wp_{ilmp}\wp_{jkno} \\
&&\quad + \wp_{ilno}\wp_{jkmp} + \wp_{ilnp}\wp_{jkmo} + \wp_{ilop}\wp_{jkmn} + \wp_{imno}\wp_{jklp} + \wp_{imnp}\wp_{jklo} \\
&&\quad + \wp_{imop}\wp_{jkln} + \wp_{inop}\wp_{jklm} \big).
\end{eqnarray*}
which was used to complete the basis for $\Gamma(4)$ in the (2,7)-case as detailed in \cite{MEeo10}.

\section{The 4-index relations} \label{APP_4index}

The full set of 4-index relations for the functions associated with the cyclic (3,5)-curve are given below.
\begin{eqnarray*}
&&\bm{(-4)} \quad \wp_{4444} = \textstyle  6\wp_{44}^{2} - 3\wp_{33}
\\
&&\bm{(-5)} \quad \wp_{3444} = \textstyle  6\wp_{34}\wp_{44} + 3\wp_{24}
\\
&&\bm{(-6)} \quad \wp_{3344} = \textstyle  4 \wp_{34}^{2} - \wp_{23} + 2\wp_{34}\lambda_{4} + 2\wp_{33}\wp_{44}
\\
&&\bm{(-7)} \quad \wp_{3334} = \textstyle  6\wp_{33}\wp_{34} - Q_{{2444}}
\\
&&\bm{(-7)} \quad \wp_{2444} = \textstyle  6\wp_{24}\wp_{44} + Q_{{2444}}
\\
&&\bm{(-8)} \quad \wp_{2344} = \textstyle  4\wp_{14} - \wp_{22} + 2\wp_{24}\lambda_{4}+ 2\wp_{23}\wp_{44} + 4\wp_{24}\wp_{34}
\\
&&\bm{(-8)} \quad \wp_{3333} = \textstyle  12\wp_{14} - 3\wp_{22} + 6 \wp_{33}^{2}
\\
&&\bm{(-9)} \quad \wp_{2334} = \textstyle  2\wp_{13} + 3\wp_{34}\lambda_{3}+ \lambda_{2} + 4\wp_{23}\wp_{34} + 2\wp_{24}\wp_{33}
\\
&&\bm{(-10)} \quad \wp_{2244} = \textstyle  \frac{2}{3}Q_{{2444}}\lambda_{4}+ \wp_{33}\lambda_{3}+ 2\wp_{22}\wp_{44} + 4\wp_{24}^{2}+ \frac{2}{3}Q_{{1444}}
\\
&&\bm{(-10)} \quad \wp_{1444} = \textstyle  Q_{{1444}}+ 6\wp_{14}\wp_{44}
\\
&&\bm{(-10)} \quad \wp_{2333} = \textstyle  6\wp_{23}\wp_{33} - 2Q_{{1444}}+ 3\wp_{33}\lambda_{3}
\\
&&\bm{(-11)} \quad \wp_{2234} = \textstyle  4\wp_{23}\wp_{24} - 2\wp_{12} + 4\wp_{14}\lambda_{4}+ 3\wp_{24}\lambda_{3}- 2\wp_{44}\lambda_{2}+ 2\wp_{22}\wp_{34}
\\
&&\bm{(-11)} \quad \wp_{1344} = \textstyle  4\wp_{14}\wp_{34} - \wp_{12} + 2\wp_{14}\lambda_{4}+ 2\wp_{13}\wp_{44}
\\
&&\bm{(-12)} \quad \wp_{1334} = \textstyle  2\wp_{13}\lambda_{4} + \frac{3}{2}\wp_{23}\lambda_{3} + 2\wp_{34}\lambda_{2}
+ \lambda_{4}\lambda_{2} + 4\wp_{13}\wp_{34} \\
&&\qquad \textstyle + 2\wp_{14}\wp_{33} - \frac{1}{2}Q_{{2233}}
\\
&&\bm{(-12)} \quad \wp_{2233} = \textstyle  Q_{{2233}}+ 2\wp_{22}\wp_{33} + 4 \wp_{23}^{2}
\\
&&\bm{(-13)} \quad \wp_{1333} = \textstyle  6\wp_{13}\wp_{33} - 2\lambda_{4}Q_{{1444}}+ 3Q_{{1244}}
\\
&&\bm{(-13)} \quad \wp_{1244} = \textstyle  Q_{{1244}} + 2\wp_{12}\wp_{44} + 4\wp_{14}\wp_{24}
\\
&&\bm{(-13)} \quad \wp_{2224} = \textstyle  4\lambda_{4}Q_{{1444}}+ Q_{{2444}}\lambda_{3}+ 6\wp_{33}\lambda_{2}- 6Q_{{1244}} + 6\wp_{22}\wp_{24}
\\
&&\bm{(-14)} \quad \wp_{2223} = \textstyle  6\wp_{22}\wp_{23} - 6\wp_{11} + 6\wp_{14}\lambda_{3}+ 3\wp_{22}\lambda_{3}- 6\wp_{44}\lambda_{1}
\\
&&\bm{(-14)} \quad \wp_{1234} = \textstyle  2\wp_{14}\wp_{23} - \wp_{11} + 3\wp_{14}\lambda_{3}- \wp_{44}\lambda_{1}+ 2\wp_{12}\wp_{34} + 2\wp_{13}\wp_{24}
\\
&&\bm{(-15)} \quad \wp_{1233} = \textstyle  3\wp_{13}\lambda_{3}+ 2\wp_{34}\lambda_{1}+ \lambda_{4}\lambda_{1}- 3\lambda_{0}+ 2\wp_{12}\wp_{33} + 4\wp_{13}\wp_{23}
\\
&&\bm{(-16)} \quad \wp_{2222} = \textstyle 4Q_{{2444}}\lambda_{2} - 12Q_{{1244}}\lambda_{4}
+ 8{\lambda_{4}}^{2}Q_{{1444}} + 2\lambda_{3}Q_{{1444}} \\
&&\qquad \textstyle - 12\wp_{33}{\lambda_{3}}^{2} + 12\wp_{33}\lambda_{4}\lambda_{2}+ 6\wp_{33}\lambda_{1}
+ 6Q_{{1144}}+ 3Q_{{2333}}\lambda_{3}+ 6 \wp_{22}^{2}
\\
&&\bm{(-16)} \quad \wp_{1144} = \textstyle  Q_{{1144}}+ 2\wp_{11}\wp_{44} + 4 \wp_{14}^{2}
\\
&&\bm{(-16)} \quad \wp_{1224} = \textstyle  - Q_{{1144}}- \frac{1}{2}Q_{{2333}}\lambda_{3} + \frac{3}{2}\wp_{33} {\lambda_{3}}^{2}+ 3\wp_{33}\lambda_{1}\\
&&\qquad \textstyle + 4\wp_{12}\wp_{24} + 2\wp_{14}\wp_{22}
\\
&&\bm{(-17)} \quad \wp_{1223} = \textstyle  - 2\wp_{11}\lambda_{4}+ 3\wp_{12}\lambda_{3} + 4\wp_{14}\lambda_{2}- 2\wp_{24}\lambda_{1}
- 6\wp_{44}\lambda_{0} \\
&&\qquad \textstyle + 4\wp_{12}\wp_{23} + 2\wp_{13}\wp_{22}
\\
&&\bm{(-17)} \quad \wp_{1134} = \textstyle  2\wp_{14}\lambda_{2}- \wp_{24}\lambda_{1} + 2\wp_{11}\wp_{34} + 4\wp_{13}\wp_{14}
\\
\end{eqnarray*}
\begin{eqnarray*}
&&\bm{(-18)} \quad \wp_{1133} = \textstyle  2\wp_{11}\wp_{33} + 4 \wp_{13}^{2}+ 2\wp_{13}\lambda_{2} - \wp_{23}\lambda_{1}+ 6\wp_{34}\lambda_{0}
\\
&&\bm{(-19)} \quad \wp_{1124} = \textstyle  \frac{2}{3}\lambda_{2}Q_{{1444}}- \frac{1}{3}Q_{{2444}}\lambda_{1} + 4\wp_{33}\lambda_{0}+ 2\wp_{11}\wp_{24} + 4\wp_{12}\wp_{14}
\\
&&\bm{(-19)} \quad \wp_{1222} = \textstyle  - 3Q_{{1244}}\lambda_{3}+ 2\lambda_{4}\lambda_{3}Q_{{1444}}+ 2Q_{{2444}}\lambda_{1}
+ 6\wp_{33}\lambda_{4}\lambda_{1} \\
&&\qquad \textstyle + 6\wp_{33}\lambda_{0}+ 6\wp_{12}\wp_{22}
\\
&&\bm{(-20)} \quad \wp_{1123} = \textstyle  2\wp_{12}\lambda_{2}- 6\wp_{24}\lambda_{0}- \wp_{22}\lambda_{1}+ 4\wp_{14}\lambda_{1}+ 2\wp_{11}\wp_{23} + 4\wp_{12}\wp_{13}
\\
&&\bm{(-22)} \quad \wp_{1122} = \textstyle  \wp_{33}\lambda_{3}\lambda_{1}+ 8\wp_{33}\lambda_{4}\lambda_{0}+ \frac{4}{3}\lambda_{4}\lambda_{2}Q_{{1444}}
- 2Q_{{1244}}\lambda_{2} \\
&&\qquad \textstyle + 2Q_{{2444}}\lambda_{0} + 2\wp_{11}\wp_{22} + 4 \wp_{12}^{2}+ \frac{2}{3}\lambda_{1}Q_{{1444}}
\\
&&\bm{(-22)} \quad \wp_{1114} = \textstyle  - 2Q_{{2444}}\lambda_{0}+ 6\wp_{11}\wp_{14} + \lambda_{1}Q_{{1444}}
\\
&&\bm{(-23)} \quad \wp_{1113} = \textstyle  6\wp_{11}\wp_{13}- 6\wp_{22}\lambda_{0}+ 6\wp_{14}\lambda_{0}+ 3\wp_{12}\lambda_{1}
\\
&&\bm{(-25)} \quad \wp_{1112} = \textstyle  2\lambda_{0}Q_{{1444}}+ 6\wp_{33}\lambda_{3}\lambda_{0}
+ 2\lambda_{4}\lambda_{1}Q_{{1444}}+ 6\wp_{11}\wp_{12}- 3Q_{{1244}}\lambda_{1}
\\
&&\bm{(-28)} \quad \wp_{1111} = \textstyle 6 \wp_{11}^{2} - 3\wp_{33} {\lambda_{1}}^{2} + 8\lambda_{4}\lambda_{0}Q_{{1444}}
+ 12\wp_{33}\lambda_{2}\lambda_{0}- 12Q_{{1244}}\lambda_{0}
\end{eqnarray*}
These relations are also available online at \cite{DBAFweb}.

\section{The addition formula} \label{APP_35add}

The addition formula presented in Theorem \ref{thm:35_3t2v} is constructed using the following polynomials.
\begin{eqnarray*}
&&P_{24}(\bu,\bv) = \textstyle
- \frac{1}{8}\mathcal{T}_{222222}(\bv)
+ \frac{3}{4}\mathcal{T}_{114444}(\bv)\wp_{23}(\bu)
+ \frac{3}{4}\mathcal{T}_{133344}(\bv)\wp_{13}(\bu) \\
&&\quad \textstyle + \frac{3}{8}\mathcal{T}_{222224}(\bu)\wp_{34}(\bv)
+ \frac{1}{8}\wp_{122}(\bv)\wp_{224}(\bu)
+ \frac{1}{4}\wp_{114}(\bu)\wp_{224}(\bv) \\
&&\quad \textstyle - \frac{1}{8}\wp_{124}(\bu)\wp_{124}(\bv)
- \frac{1}{4}\wp_{144}(\bu)\wp_{114}(\bv)
- \frac{1}{8}\wp_{333}(\bu)\wp_{112}(\bv) \\
&&\quad \textstyle + \frac{1}{4}\wp_{122}(\bu)\wp_{144}(\bv)
- \frac{1}{8}\wp_{222}(\bu)\wp_{124}(\bv)
- \frac{3}{8}\wp_{112}(\bu)\wp_{244}(\bv) \\
&&\quad \textstyle + \frac{1}{4}\wp_{444}(\bu)\wp_{111}(\bv)
+ \frac{1}{16}Q_{2233}(\bu)Q_{2233}(\bv)
- \frac{1}{24}\partial_{3}Q_{1244}(\bu)\partial_{3}Q_{2444}(\bv) \\
&&\quad \textstyle - \wp^{[1313]}(\bv)\wp_{23}(\bu)
+ \frac{1}{2}\wp^{[1323]}(\bv)\wp^{[2334]}(\bu)
+ \frac{1}{4}\wp^{[1323]}(\bv)\wp_{13}(\bu)\\
&&\quad \textstyle + \frac{1}{2}\wp^{[3434]}(\bv)\wp^{[1313]}(\bu)
- \frac{1}{2}\wp^{[2323]}(\bu)\wp^{[1334]}(\bv)
- \frac{1}{2}\wp^{[1334]}(\bu)\wp^{[1334]}(\bv) \\
&&\quad \textstyle - \frac{1}{8}\partial_{3}Q_{1144}(\bv)\wp_{333}(\bu)
- \frac{1}{24}\wp_{122}(\bu)\partial_{3}Q_{2444}(\bv)
+ \frac{1}{24}\wp_{222}(\bu)\partial_{3}Q_{1444}(\bv) \\
&&\quad \textstyle + \frac{1}{8}\partial_{3}Q_{1244}(\bv)\wp_{224}(\bu)
- \frac{1}{2}\partial_{3}Q_{1244}(\bu)\wp_{144}(\bv)
+ \frac{1}{6}\partial_{3}Q_{2444}(\bu)\wp_{114}(\bv) \\
&&\quad \textstyle - \frac{1}{72}\partial_{3}Q_{1444}(\bu)\partial_{3}Q_{1444}(\bv)
- \frac{1}{6}\wp_{124}(\bv)\partial_{3}Q_{1444}(\bu)
\\ \\
&&P_{21}(\bu,\bv) = \textstyle \big( \,
\frac{1}{2}\wp_{13}(\bu)\wp^{[1334]}(\bv)
- \frac{1}{4}\wp_{13}(\bv)Q_{2233}(\bu)
- \wp_{34}(\bu)\wp^{[1313]}(\bv) \\
&&\quad \textstyle - \frac{1}{12}\wp_{224}(\bu)\partial_{3}Q_{1444}(\bv)
+ \frac{1}{3}\wp_{144}(\bv)\partial_{3}Q_{1444}(\bu)
+ \frac{1}{4}\wp_{114}(\bu)\wp_{333}(\bv) \\
&&\quad \textstyle + \frac{1}{36}\partial_{3}Q_{2444}(\bv)\partial_{3}Q_{1444}(\bu)
+ \frac{3}{8}\mathcal{T}_{222224}(\bu)
\, \big) \lambda_{4}
\\ \\
\end{eqnarray*}
\begin{eqnarray*}
&&P_{18}(\bu,\bv) = \textstyle \big( \,
\frac{3}{4}\mathcal{T}_{114444}(\bv)
- \frac{3}{8}\mathcal{T}_{133344}(\bv)\wp_{34}(\bu)
- \frac{1}{8}\wp_{333}(\bu)\wp_{124}(\bv) \\
&&\quad \textstyle - \frac{3}{8}\wp_{23}(\bu)Q_{2233}(\bv)
- \wp^{[1313]}(\bv)
+ \frac{3}{8}\wp^{[1323]}(\bu)\wp_{34}(\bv) \\
&&\quad \textstyle + \frac{3}{4}\wp_{13}(\bu)\wp_{13}(\bv)
+ \frac{1}{24}\wp_{333}(\bv)\partial_{3}Q_{1444}(\bu)
\, \big) \lambda_{3}
- \frac{1}{2}\wp^{[1313]}(\bu)\lambda_{4}^2
\\ \\
&&P_{15}(\bu,\bv) = \textstyle \big( \,
\frac{1}{4}\wp^{[1323]}(\bv)
+ \frac{1}{4}\wp_{333}(\bv)\wp_{144}(\bu)
+ \frac{3}{8}Q_{2233}(\bv)\wp_{34}(\bu) \\
&&\quad \textstyle - \wp_{34}(\bu)\wp^{[1334]}(\bv)
- \frac{3}{4}\wp_{34}(\bu)\wp^{[2323]}(\bv)
+ \frac{1}{2}\wp_{23}(\bu)\wp_{13}(\bv) \\
&&\quad \textstyle + \frac{1}{2}\wp^{[3434]}(\bu)\wp_{13}(\bv)
+ \frac{3}{4}\mathcal{T}_{133344}(\bv)
\, \big)\lambda_{2} \\
&&\quad \textstyle + \big( \,
\frac{3}{4}\wp_{23}(\bu)\wp_{13}(\bv)
- \frac{3}{8}\mathcal{T}_{133344}(\bv)
- \frac{1}{8}\wp^{[1323]}(\bu) \\
&&\quad \textstyle - \frac{3}{4}\wp_{34}(\bu)\wp^{[1334]}(\bv)
- \frac{1}{4}\wp_{34}(\bu)Q_{2233}(\bv)
\, \big)\lambda_{4}\lambda_{3}
\\ \\
&&P_{12}(\bu,\bv) = \textstyle
\frac{1}{4}Q_{2233}(\bv)\lambda_{3}^2
+ \frac{1}{2}\wp^{[2323]}(\bv)\lambda_{1}
- \frac{1}{2}Q_{2233}(\bu)\lambda_{1} \\
&&\quad \textstyle + \frac{1}{2}\wp_{34}(\bu)\wp^{[2334]}(\bv)\lambda_{1}
- \frac{1}{8}\wp_{244}(\bu)\wp_{333}(\bv)\lambda_{1}
+ \frac{5}{2}\wp_{34}(\bv)\wp_{13}(\bu)\lambda_{1} \\
&&\quad \textstyle + \frac{9}{16}\wp_{23}(\bu)\wp_{23}(\bv)\lambda_{3}^2
- \frac{1}{4}\wp_{23}(\bu)\wp^{[3434]}(\bv)\lambda_{1}
+ \frac{1}{8}Q_{2233}(\bv)\lambda_{4}\lambda_{2}\\
&&\quad \textstyle - \frac{3}{4}\wp^{[1334]}(\bu)\lambda_{4}^2\lambda_{3}
- \frac{1}{4}\wp^{[2323]}(\bu)\lambda_{4}\lambda_{2}
- \frac{3}{8}\wp^{[2334]}(\bv)\wp_{34}(\bu)\lambda_{3}^2  \\
&&\quad \textstyle - \frac{1}{4}Q_{2233}(\bv)\lambda_{4}^2\lambda_{3}
+ \frac{1}{2}\wp^{[1334]}(\bu)\lambda_{4}\lambda_{2}
- \frac{3}{2}\wp_{13}(\bu)\wp_{34}(\bv)\lambda_{3}^2\\
&&\quad \textstyle - \frac{1}{2}\wp^{[1334]}(\bu)\lambda_{3}^2
- \frac{1}{8}\wp^{[2323]}(\bu)\lambda_{3}^2
- \frac{1}{2}\wp_{34}(\bv)\wp_{13}(\bu)\lambda_{4}\lambda_{2}\\
&&\quad \textstyle - \frac{1}{4}\wp_{23}(\bu)\wp_{23}(\bv)\lambda_{1}
+ \wp_{34}(\bu)\wp_{13}(\bv)\lambda_{4}^2\lambda_{3}
\\ \\
&&P_{9}(\bu,\bv) = \textstyle \big( \,
\frac{3}{4}\wp_{34}(\bv)\wp_{23}(\bu)
+ \frac{1}{4}\wp_{333}(\bv)\wp_{444}(\bu)
- \frac{9}{2}\wp_{13}(\bv)
\, \big)\lambda_{0}  \\
&&\quad \textstyle \big( \,
\frac{1}{2}\wp^{[2334]}(\bu)
+ 3\wp_{13}(\bv)
- \frac{3}{4}\wp^{[3434]}(\bv)\wp_{34}(\bu)
- \frac{3}{4}\wp_{23}(\bv)\wp_{34}(\bu)
\, \big)\lambda_{3}\lambda_{2} \\
&&\quad \textstyle \big( \,
2\wp_{13}(\bu)
+ \frac{1}{2}\wp^{[2334]}(\bu)
+ \frac{5}{4}\wp_{34}(\bv)\wp_{23}(\bu)
\, \big)\lambda_{4}\lambda_{1}
- \frac{1}{2}\wp_{13}(\bu)\lambda_{4}^2\lambda_{2} \\
&&\quad \textstyle \big( \,
\frac{3}{4}\wp_{23}(\bv)\wp_{34}(\bu)
- 2\wp_{13}(\bu)
- \frac{3}{8}\wp^{[2334]}(\bu)
\, \big) \lambda_{4}\lambda_{3}^2
+ \wp_{13}(\bu)\lambda_{4}^3\lambda_{3}
\\ \\
&&P_{6}(\bu,\bv) = \textstyle \big( \,
\lambda_{3}\lambda_{1}
- \frac{1}{2}\lambda_{2}^2
+ \frac{3}{4}\lambda_{4}\lambda_{0}
\, \big)\wp_{23}(\bv) + \big( \,
\frac{3}{4}\lambda_{4}^2\lambda_{3}^2
- \frac{3}{4}\lambda_{3}^3
+ \lambda_{4}^2\lambda_{1}
\, \big)\wp_{23}(\bu) \\
&&\quad \textstyle
+ \wp^{[3434]}(\bu)\lambda_{2}^2
- \frac{1}{4}\wp^{[3434]}(\bu)\lambda_{3}\lambda_{1}
- \frac{3}{4}\wp^{[3434]}(\bv)\lambda_{4}\lambda_{3}\lambda_{2} \\
&&\quad \textstyle + \big( \,
\frac{1}{2}\lambda_{2}^2
- \frac{3}{2}\lambda_{4}^2\lambda_{1}
- 3\lambda_{4}\lambda_{0}
+ \frac{1}{4}\lambda_{4}\lambda_{3}\lambda_{2}
\, \big)\wp_{34}(\bu)\wp_{34}(\bv)
\\ \\
&&P_{3}(\bu,\bv) = \textstyle \big( \,
\frac{9}{4}\lambda_{3}\lambda_{0}
- \frac{3}{2}\lambda_{4}^2\lambda_{0}
- \frac{1}{2}\lambda_{3}^2\lambda_{2}
- \frac{3}{2}\lambda_{4}^3\lambda_{1}
- \frac{1}{4}\lambda_{4}\lambda_{2}^2
\, \big) \wp_{34}(\bv) \\
&&\quad \textstyle + \big( \,
\frac{3}{4}\lambda_{4}^2\lambda_{3}\lambda_{2}
+ \frac{5}{4}\lambda_{4}\lambda_{3}\lambda_{1}
\, \big) \wp_{34}(\bu)
\\ \\
&&P_{0}(\bu,\bv) = \textstyle
- \frac{1}{2}\lambda_{4}^2\lambda_{2}^2
+ \frac{1}{2}\lambda_{3}^2\lambda_{1}
+ \frac{1}{2}\lambda_{4}^3\lambda_{0}
- \frac{9}{4}\lambda_{2}\lambda_{0}
- \frac{1}{2}\lambda_{4}\lambda_{3}^2\lambda_{2}
+ \frac{1}{2}\lambda_{4}^3\lambda_{3}\lambda_{2} \\
&&\quad \textstyle - \frac{1}{4}\lambda_{4}^2\lambda_{3}\lambda_{1}
+ \frac{9}{4}\lambda_{4}\lambda_{2}\lambda_{1}
+ \frac{3}{2}\lambda_{4}\lambda_{3}\lambda_{0} - 2\lambda_{1}^2
\end{eqnarray*}
These polynomials are also available online at \cite{DBAFweb}.

\bibliography{DBAF}{}

\begin{thebibliography}{10}

\bibitem{CA2008}
C.~Athorne.
\newblock Identities for hyperelliptic $\wp$-functions of genus one, two and
  three in covariant form.
\newblock {\em Journal of Physics A: Mathematical \& Theoretical}, 41:415202,
  2008.

\bibitem{ba97}
H.~F. Baker.
\newblock {\em Abelian functions: Abel's theorem and the allied theory of theta
  functions}.
\newblock Cambridge University Press, 1897 (Reprinted in 1995).

\bibitem{ba03}
H.~F. Baker.
\newblock On a system of differential equations leading to periodic functions.
\newblock {\em Acta Mathematica}, 27:135--156, 1903.

\bibitem{ba07}
H.~F. Baker.
\newblock {\em Multiply periodic functions}.
\newblock Cambridge University Press, Cambridge, 1907 (Reprinted in 2007 by
  Merchant Books. ISBN 193399880).

\bibitem{bego08}
S.~Baldwin, J.~C. Eilbeck, J.~Gibbons, and Y.~\^Onishi.
\newblock Abelian functions for cyclic trigonal curves of genus four.
\newblock {\em Journal of Geometry and Physics}, 58:450--467, 2008.

\bibitem{bg06}
S.~Baldwin and J.~Gibbons.
\newblock Genus 4 trigonal reduction of the {B}enny equations.
\newblock {\em Journal of Physics A: Mathematical and Theoretical},
  39:3607--3639, 2006.

\bibitem{bel97}
V.~M. Buchstaber, V.~Z. Enolskii, and D.~V. Leykin.
\newblock Kleinian functions, hyperelliptic {J}acobians and applications.
\newblock {\em Reviews in Math. and Math. Physics}, 10:1--125, 1997.

\bibitem{bel99}
V.~M. Buchstaber, V.~Z. Enolskii, and D.~V. Leykin.
\newblock Rational analogs of {A}belian functions.
\newblock {\em Functional Analysis and its Applications}, 33:83--94, 1999.

\bibitem{bel00}
V.~M. Buchstaber, V.~Z. Enolskii, and D.~V. Leykin.
\newblock Uniformization of {J}acobi varieties of trigonal curves and nonlinear
  equations.
\newblock {\em Functional Analysis and its Applications}, 34:159--171, 2000.

\bibitem{cn06}
K.~Cho and A.~Nakayashiki.
\newblock Differential structure of {A}belian functions.
\newblock {\em International Journal of Mathematics}, 19:145--171, 2008.

\bibitem{eeg10}
J.~C. Eilbeck, V.~Z. Enolski, and J.~Gibbons.
\newblock Sigma, tau and abelian functions of algebraic curves.
\newblock {\em J. Phys. A: Math. Theor.}, 43:455216, 2010.

\bibitem{eemop07}
J.~C. Eilbeck, V.~Z. Enolski, S.~Matsutani, Y.~\^Onishi, and E.~Previato.
\newblock Abelian functions for trigonal curves of genus three.
\newblock {\em International Mathematics Research Notices}, page Art.ID: rnm140
  (38 pages), 2007.

\bibitem{eel00}
J.~C. Eilbeck, V.~Z. Enolskii, and D.~V. Leykin.
\newblock On the {K}leinian construction of {A}belian functions of canonical
  algebraic curves.
\newblock In D~Levi and O~Ragnisco, editors, {\em Proceedings of the 1998 SIDE
  III Conference, 1998: Symmetries of Integrable Differences Equations}, volume
  CRMP/25 of {\em CRM Proceedings and Lecture Notes}, pages 121--138, 2000.

\bibitem{emo11}
J.~C. Eilbeck, S.~Matsutani, and Y.~\^Onishi.
\newblock Addition formulae for abelian functions associated with specialized
  curves.
\newblock {\em Phil. Trans. R. Soc. A}, 369(1939):1245--12638, 2011.

\bibitem{MEeo10}
J.C. Eilbeck, M.~England, and Y.~\^Onishi.
\newblock Abelian functions associated with genus three algebraic curves.
\newblock {\em preprint arXiv:1008.0289}, 2010.

\bibitem{DBAFweb}
M.~England.
\newblock \\
  \texttt{http://www.maths.gla.ac.uk/$\stackrel{\sim}{}$mengland/Papers/2011\_%
DBAF/}.

\bibitem{MEhgt10}
M.~England.
\newblock Higher genus {A}belian functions associated with cyclic trigonal
  curves.
\newblock {\em SIGMA}, 6:025, (22pp), 2010.

\bibitem{MEe09}
M.~England and J.~C. Eilbeck.
\newblock {A}belian functions associated with a cyclic tetragonal curve of
  genus six.
\newblock {\em Journal of Physics A: Mathematical and Theoretical}, 42:095210
  (27pp), 2009.

\bibitem{MEg09}
M.~England and J.~Gibbons.
\newblock A genus six cyclic tetragonal reduction of the {B}enney equations.
\newblock {\em Journal of Physics A: Mathematical and Theoretical}, 42:375202
  (27pp), 2009.

\bibitem{la82}
S.~Lang.
\newblock {\em Introduction to algebraic functions and {A}belian functions}.
\newblock Number~89 in Graduate Texts in Mathematics. Springer-Verlag, 2nd.
  edition, 1982.

\bibitem{N10}
A.~Nakayashiki.
\newblock On algebraic expressions of sigma functions for $(n,s)$-curves.
\newblock {\em Asian Journal of Mathematics}, 14(2):175--212, 2010.

\bibitem{N11}
A.~Nakayashiki.
\newblock On hyperelliptic abelian functions of genus 3.
\newblock {\em Journal of Geometry and Physics}, 61(6):961--985, 2011.

\bibitem{N01s}
A.~Nakayashiki and F.~Smirnov.
\newblock Cohomologies of affine hyperelliptic {J}acobi varieties and
  integrable systems.
\newblock {\em Comm. Math. Phys.}, 217:623--652, 2001.

\bibitem{on98}
Y.~\^Onishi.
\newblock Complex multiplication formulae for hyperelliptic curves of genus
  three.
\newblock {\em Tokoyo Journal Mathematics}, 21:381--431, 1998.
\newblock A list of corrections is availabe from {\tt
  http://web.cc.iwate-u.ac.jp/\~{}onishi/}.

\end{thebibliography}
\bibliographystyle{plain}

\end{document}